\newtheorem{theorem}{Theorem}
\newtheorem{lemma}[theorem]{Lemma}
\newtheorem{definition}[theorem]{Definition}
\newtheorem{corollary}[theorem]{Corollary}
\newtheorem{proposition}[theorem]{Proposition}
\newtheorem{remark}[theorem]{Remark}
\newcommand{\qed}{\hfill$\square$}
\newcommand{\theoremfin}{\hfill$\blacksquare$}
\newenvironment{proof}{%
  \noindent{\em Proof.\ }}{%
  \hspace*{\fill}\qed \\
  \vspace{2ex}}
\newcommand{\mymid}{:~}
\newcommand{\textchange}[1]{#1}
\title{Strongly Secure Privacy Amplification Cannot Be Obtained by
Encoder of Slepian-Wolf Code}
\begin{document}
\maketitle
\begin{summary}
Privacy amplification is a technique to distill
a secret key from a random variable by a function
so that the distilled key and eavesdropper's 
random variable are statistically independent. 
There are three kinds of security criteria for the 
key distilled by privacy amplification: 
the normalized divergence criterion, which is also
known as the weak security criterion,
the variational distance criterion, and 
the divergence criterion, which is also known as the 
strong security criterion. As a technique to distill
a secret key, it is known that the encoder of a 
Slepian-Wolf (the source coding with
full side-information at the decoder)
code can be used as a function
for privacy amplification if we employ the weak
security criterion. In this paper, we show that
the encoder of a Slepian-Wolf
code cannot be used
as a function for privacy amplification
if we employ the criteria other than
the weak one.
\end{summary}
\begin{keywords}
privacy amplification, secret key agreement, Slepian-Wolf coding, 
strong security, variational distance, weak security
\end{keywords}

\section{Introduction}
One of the fundamental problems in the cryptography is
the key agreement in which the legitimate parties,
usually referred to as Alice and Bob, share a secret 
key that is not known by the eavesdropper, usually referred
to as Eve. The problems on the key agreement in the
information theory was initiated by Maurer \cite{maurer:93},
and was also studied
by Ahlswede and Csisz\'ar \cite{ahlswede:93}.
In their formulation, Alice, Bob, and Eve have correlated
random variables $X^n$, $Y^n$, and $Z^n$ respectively. 
Then, Alice and Bob generate a secret key from $(X^n,Y^n)$
by using the public (authenticated) communication. 

Typically, a key agreement protocol consists of two
procedures: information reconciliation 
\cite{bennett:88,brassard:94} and privacy 
amplification \cite{bennett:88,bennett:95}.
The purpose of the information reconciliation
for Alice and Bob is to share an identical random
variable (with high probability) by using the public 
communication. Privacy amplification is a technique
to distill a secret key from the shared random variable by using
a function so that  Eve's knowledge
and the secret key are statistically independent. In order to focus on 
privacy amplification, we assume that Alice and Bob initially
share the random variables $X^n = Y^n$ in the rest of this paper.

As for the security of the secret key distilled by
privacy amplification, 
there are three kinds of security criteria: 
the normalized divergence criterion, which is also
known as the weak security criterion,
the variational distance criterion, and 
the divergence criterion, which is also known as the 
strong security criterion.
The normalized divergence criterion requires that
the key and Eve's knowledge $Z^n$ is 
(almost) statistically independent in the sense
that the divergence divided by $n$ (normalized divergence), 
or equivalently the mutual information 
divided by $n$, is negligible.
On the other hand,
the variational distance criterion and the
divergence criterion require that the
key and Eve's knowledge is (almost) statistically independent in
the sense that the variational distance and the divergence
are negligible respectively.

Traditionally, the normalized divergence criterion was employed
in the study of the key 
agreement (e.g.~\cite{maurer:93,ahlswede:93}).
\textchange{However, as pointed out by Maurer \cite{maurer:94b} 
and independently by Csisz\'ar \cite{csiszar:96},}
Eve might know a large part of the key even if the key
satisfies the normalized divergence criterion. Therefore, we should
use the divergence criterion. Indeed, recent studies
on the key agreement employ the divergence criterion
(e.g.~\cite{csiszar:04,naito:08}).

As one of techniques to distill a secret key, it is known that the encoder
of a Slepian-Wolf (the source coding with
full side-information at the decoder)
code \cite{slepian:73} can be used as
a function for privacy amplification.
For example, Ahlswede and Csisz\'ar 
used this technique implicitly \cite{ahlswede:93}, 
and Muramatsu 
used this technique explicitly \cite{muramatsu:06b}.

To describe the technique more precisely, 
let us consider the Slepian-Wolf code system such that
$X^n$ is the principal source and $Z^n$ is the 
side-information. Then, the output of the encoder, which
is regarded as the key, satisfies the normalized divergence
criterion if the coding rate of the code is close to
the compression limit and the decoding error probability
of the code is negligible. However, it has not been
clarified whether this technique can be used for
the divergence criterion. 

In this paper, we show that above mentioned technique cannot
be used for the divergence criterion.
Actually, we show that the divergence grows infinitely
in the order of $\sqrt{n}$, which suggests that
Eve might know a large part of the key.

\textchange{In order to show that the divergence grows in the order 
of $\sqrt{n}$, the second order converse coding
theorem of the Slepian-Wolf code system 
(Theorem \ref{proposition:sw-second-order-converse})
plays an important role.
The second order source coding (without side-information)
was studied by several authors \cite{hayashi:08,kontoyiannis:97,nomura:07},
and Theorem \ref{proposition:sw-second-order-converse} can be regarded
as a generalization of Hayashi's result \cite{hayashi:08} to the Slepian-Wolf
code system. During the process of
the review, Nomura and Matsushima published the 
result \cite{nomura:09b}
concerning the second order asymptotic of the Slepian-Wolf
code system (the source coding with full side-information
at the decoder)\footnote{The case in which both sources are
encoded was published in \cite{nomura:09}.}.} 

\textchange{
The difference between this paper and \cite{nomura:09b}
is summarized as follows.
Since the main purpose of this paper is to show that
Slepian-Wolf codes cannot be used as strongly secure 
privacy amplification, we only showed the converse
coding theorem. On the other hand, the main
purpose of \cite{nomura:09b} is to show the
second order coding theorem of the Slepian-Wolf code system,
and they showed both the direct and converse parts.
Although the independent and identically distributed (i.i.d.) source
is exclusively treated in this paper, the slightly wider class of sources,
i.e., the sources such that the conditional self-information
has an asymptotic normality, is treated in \cite{nomura:09b}. 
It should be noted that the  approach in \cite{nomura:09b} is different from
that in this paper and is similar to the approach in \cite{nomura:07}.}

Although the divergence criterion is the strongest notion
of security among the above mentioned three criteria, 
some researchers (eg.~\cite{renner:05d}) deem that the variational distance 
criterion is appropriate notion of security because
it matches with the universally composable security \cite{canetti:01},
which requires that the actual distribution of the key and
Eve's knowledge is indistinguishable from the ideal distribution
with which the key is uniformly distributed and
independent of Eve's knowledge. 
Therefore, it is worthwhile 
clarifying whether the key obtained by the above mentioned technique
satisfies the variational distance criterion or not.
In this paper, we show that the key obtained by the technique
does not satisfy the variational distance criterion.
Actually, we show that the variational distance 
converges to one (the maximum amount), which means that
the actual distribution and the ideal distribution are
completely distinguishable.

The results in this paper are also interesting from
the view point other than privacy amplification.
The above mentioned technique can be regarded as the 
Slepian-Wolf version of the folklore theorem
shown by Han \cite{han:05}. 
Recently, Hayashi \cite{hayashi:08} showed that
the folklore theorem does not hold if we employ
the variational distance criterion nor the divergence
criterion instead of the normalized divergence criterion.
Our results can be regarded as a generalization of 
Hayashi's results for the Slepian-Wolf code.

The rest of this paper is organized as follows:
In Section \ref{preliminaries}, we review the basic 
notations, privacy amplification, and the above
mentioned technique. 
In Section \ref{main-results-2}, we show our main results
concerning the divergence criterion and their proofs.
In Section \ref{main-results},
we show our main results concerning 
the variational distance criterion and their proofs.
In Section \ref{sec:conclusion}, we conclude the paper.

Finally, it should be noted that the results on
the divergence criterion and 
the variational distance criterion cannot be derived 
from each other, though a weak version of
the result on the divergence criterion, i.e.,
the fact that the divergence does not converge
to zero (Corollary \ref{corollary-of-main-theorem}), 
can be derived as a corollary of
the results on the variational distance criterion.
The weak version only suggests that Eve might know
a few bits about the key whose length grows
infinitely as $n$ goes to infinity, which is not 
a serious problem in practice.
On the other hand, the result in Section \ref{main-results-2}
suggests that Eve's knowledge about the key also grows infinitely
as the length of the key goes to infinity, which is a serious
problem in practice. Therefore, we need to treat both the divergence
criterion and the variational distance criterion separately.

\section{Preliminaries}
\label{preliminaries}
\subsection{Privacy Amplification}

In this section, we review the basic
notations related to privacy amplification.
Suppose that Alice and Bob have a random variable $X^n$
on ${\cal X}^n$, and Eve has a random variable
$Z^n$ on ${\cal Z}^n$, where
$(X^n, Z^n)$ are independently
identically distributed (i.i.d.) according to
the probability distribution $P_{XZ}$.
In this paper, we assume that 
${\cal X}$ and ${\cal Z}$ are finite sets.

Privacy amplification \cite{bennett:88,bennett:95}
is a technique to distill a secret key $S_n$ from 
$X^n$ by using a function 
\begin{eqnarray*}
f_n:{\cal X}^n \to {\cal M}_n = \{1,\ldots,M_n \}
\end{eqnarray*}
so that the key and Eve's information $Z^n$
are statistically independent and the key is
uniformly distributed on the key alphabet ${\cal M}_n$.
The joint probability distribution of the key and Eve's 
information is given by
\begin{eqnarray}
\label{eq:definition-of-psz}
P_{S_nZ^n}(s,z^n) = \sum_{x^n \in f_n^{-1}(s)}
 P_{X^n Z^n}(x^n, z^n)
\end{eqnarray}
for $(s,z^n) \in {\cal M}_n \times {\cal Z}^n$,
where we defined
$f_n^{-1}(s) = \{x^n \mymid f_n(x^n) = s \}$.

For probability distributions $P$ and $Q$
on ${\cal A}$, let 
\begin{eqnarray}
d(P,Q) &=& \frac{1}{2} \sum_{a \in {\cal A}}
  |P(a) - Q(a)| \\
  &=& \textchange{ P({\cal B}) - Q({\cal B}) }
\label{eq:second-def-of-variational-distance}
\end{eqnarray}
be the variational distance (divided by $2$) \cite{cover}, \textchange{where
${\cal B} \subset {\cal A}$ is a set satisfying
$P(a) \ge Q(a)$ for $a \in {\cal B}$ and
$P(a) \le Q(a)$ for $a \in {\cal A}\backslash {\cal B}$.}
Let
\begin{eqnarray*}
D(P \| Q) = \sum_{a \in {\cal A}} P(a) \log \frac{P(a)}{Q(a)}
\end{eqnarray*}
be the divergence \cite{cover}, 
where we take the base of the logarithm to be $e$
throughout the paper.
By using these two quantities, we introduce three kinds
of security criteria on privacy amplification.

\begin{definition}
\label{def:security-definition-1}
If a sequence of functions $\{f_n\}$ satisfies
\begin{eqnarray}
\label{eq:security-definition-1}
\lim_{n \to \infty} \frac{1}{n} D(f_n) = 0
\end{eqnarray}
for 
\begin{eqnarray*}
D(f_n) = D(P_{S_nZ^n} \| P_{U_n} \times P_{Z^n}),
\end{eqnarray*} 
then we define privacy amplification by 
$\{ f_n \}$ to be secure with respect to 
the {\em normalized divergence criterion}, where
$P_{U_n}$ is the uniform distribution on ${\cal M}_n$.
\theoremfin
\end{definition}

\begin{definition}
\label{def:security-definition-2}
If a sequence of functions $\{f_n\}$ satisfies
\begin{eqnarray}
\label{eq:security-definition-2}
\lim_{n \to \infty} \Delta(f_n) = 0
\end{eqnarray}
for 
\begin{eqnarray*}
\Delta(f_n) = d(P_{S_nZ^n}, P_U \times P_{Z^n}),
\end{eqnarray*}
then we define privacy amplification by 
$\{ f_n \}$ to be secure with respect to
the {\em variational distance criterion}.
\theoremfin
\end{definition}

\begin{definition}
\label{def:security-definition-3}
If a sequence of functions $\{f_n\}$ satisfies
\begin{eqnarray}
\label{eq:security-definition-3}
\lim_{n \to \infty} D(f_n) = 0,
\end{eqnarray}
then we define privacy amplification by 
$\{ f_n \}$ to be secure with respect to 
the {\em divergence criterion}.
\theoremfin
\end{definition}

We can show that Eq.~(\ref{eq:security-definition-3})
implies Eq.~(\ref{eq:security-definition-2}) by using
Pinsker's inequality \cite{cover}. 
We can also show that
Eq.~(\ref{eq:security-definition-2}) implies
Eq.~(\ref{eq:security-definition-1}) by using 
\cite[Lemma 1]{csiszar:04}.  

The security criteria in Definitions \ref{def:security-definition-1}
and \ref{def:security-definition-3} are equivalent to 
the weak security criterion and the strong security
criterion defined in \cite{maurer:00}.
The security criterion in Definition \ref{def:security-definition-2}
is widely used recently (eg.~\cite{renner:05d}) 
because it matches with the universally
composable security \cite{canetti:01}, which requires that
the actual distribution $P_{S_n Z^n}$ and the ideal distribution
$P_{U_n} \times P_{Z^n}$ are indistinguishable.
Although the divergence is also related to the distinguishability
between distributions,
the variational distance is directly related to
the distinguishability because the optimized average probability
of the correct discrimination is given by
\begin{eqnarray}
\lefteqn{ \frac{1}{2} \max_{{\cal A} \subset {\cal M}_n \times {\cal Z}^n}
 [ P_{S_n Z^n}({\cal A}) + P_{U_n Z^n}({\cal A}^c)] } \nonumber \\
 &=& \frac{1}{2}[ 1+ d(P_{S_n Z^n}, P_{U_n} \times P_{Z^n}) ],
\label{eq:distinguishability}
\end{eqnarray}
which is a straightforward consequence of the definition
of the variational distance \cite{cover},
where the superscript $c$ designate the complement
of the set.

\subsection{Privacy Amplification by an Encoder of Slepian-Wolf Code}
\label{subsec:PA-SW}

In this section, we explain the Slepian-Wolf code,
and then review the relation between privacy 
amplification and the Slepian-Wolf code.
We consider the Slepian-Wolf code system in which
$X^n$ is the principal source and $Z^n$ is the
side-information. The code system consists of
the encoder 
\begin{eqnarray*}
\phi_n: {\cal X}^n \to {\cal M}_n
\end{eqnarray*}
and the decoder
\begin{eqnarray*}
\psi_n: {\cal M}_n \times {\cal Z}^n \to {\cal X}^n,
\end{eqnarray*}
and we denote the code as $\Phi_n = (\phi_n,\psi_n)$.
The error probability of the code is defined as
\begin{eqnarray*}
\hspace{-3mm}
\varepsilon(\Phi_n) =
 P_{X^n Z^n}(\{ (x^n,z^n) \mymid \psi_n(\phi_n(x^n),z^n) \neq x^n \}).
\end{eqnarray*}

For any real number $R >0$, the rate
$R$ is said to be achievable if there exists a sequence of
codes $\{ \Phi_n \}$ that satisfies
\begin{eqnarray}
\limsup_{n \to \infty} \frac{1}{n} \log M_n \le R ~\mbox{ and }~
\lim_{n \to \infty} \varepsilon(\Phi_n) = 0.
\label{eq:definition-of-achievable}
\end{eqnarray}
Then, we define the compression limit as
\begin{eqnarray*}
R_f(X|Z) = \inf \{ R \mymid R \mbox{ is achievable} \}.
\end{eqnarray*}
It is well known that the compression limit coincides
with the conditional entropy \cite{slepian:73}, i.e.,
$R_f(X|Z) = H(X|Z)$.

If a sequence of codes $\{ \Phi_n \}$ satisfies 
\begin{eqnarray}
\label{eq:compression-limit-achieving-1}
\lim_{n \to \infty} \frac{1}{n} \log M_n = H(X|Z)
\end{eqnarray}
and
\begin{eqnarray}
\label{eq:compression-limit-achieving-2}
\lim_{n \to \infty} \varepsilon(\Phi_n) = 0,
\end{eqnarray}
then we call the sequence of codes $\{ \Phi_n \}$ 
{\em compression limit achieving codes}.
When $\{ \Phi_n \}$ satisfies 
Eq.~(\ref{eq:compression-limit-achieving-1}),
it should be noted that the error probability
depends on the second order rate 
$\frac{1}{\sqrt{n}} \log \frac{M_n}{e^{nH(X|Z)}}$.
For later use, we present the converse coding
theorem concerning the tradeoff between the error
probability and the second order rate. The theorem
is a Slepian-Wolf coding version of the result
on the second order asymptotic of the source
coding \cite{hayashi:08}.

\begin{theorem}
\label{proposition:sw-second-order-converse}
Let $b \in \mathbb{R}$ be an arbitrary real number.
For any code sequence $\{ \Phi_n \}$, if the error 
probability satisfies 
\begin{eqnarray}
\label{eq:second-order-error-prob}
\limsup_{n \to \infty} \varepsilon(\Phi_n) <
 1 - G\left(\frac{b}{\sigma}\right),
\end{eqnarray}
then the rate satisfies
\begin{eqnarray}
\label{eq:second-order-rate-bound}
\liminf_{n \to \infty} 
  \frac{1}{\sqrt{n}} \log \frac{M_n}{e^{n H(X|Z)}} \ge b,
\end{eqnarray}
where 
\begin{eqnarray*}
G(t) = \int_{- \infty}^t \frac{1}{\sqrt{2 \pi}} e^{-u^2/2} du
\end{eqnarray*}
is the cumulative distribution function of
the Gaussian distribution with mean $0$ and variance $1$,
and where we set
\begin{eqnarray}
\label{eq:variance}
\sigma^2 = \mbox{Var}\left[ \log \frac{1}{P_{X|Z}(X|Z)} \right].
\end{eqnarray}
\theoremfin
\end{theorem}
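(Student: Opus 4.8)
The plan is to reduce the statement to a \emph{one-shot} converse bound for source coding with side information and then pass to the second-order asymptotics via a uniform central limit theorem applied to the i.i.d.\ sum $\log \frac{1}{P_{X^n|Z^n}(X^n|Z^n)} = \sum_{i=1}^{n} \log \frac{1}{P_{X|Z}(X_i|Z_i)}$.

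\textbf{Step 1 (one-shot converse).} I would first show that for every code $\Phi_n = (\phi_n,\psi_n)$ and every $\gamma > 0$,
\begin{eqnarray*}
1 - \varepsilon(\Phi_n) \le P_{X^n Z^n}\!\left( \log \frac{1}{P_{X^n|Z^n}(X^n|Z^n)} < \log M_n + \gamma \right) + e^{-\gamma}.
\end{eqnarray*}
The combinatorial heart of this is that for each fixed $z^n$ the map $m \mapsto \psi_n(m,z^n)$ has image of size at most $M_n$, so the set $\mathcal{C}_{z^n} = \{x^n : \psi_n(\phi_n(x^n),z^n) = x^n\}$ of source words decoded correctly under side information $z^n$ has $|\mathcal{C}_{z^n}| \le M_n$. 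Writing $1 - \varepsilon(\Phi_n) = \sum_{z^n} P_{Z^n}(z^n)\, P_{X^n|Z^n}(\mathcal{C}_{z^n}\mid z^n)$ and splitting $\mathcal{C}_{z^n}$ according to whether $P_{X^n|Z^n}(x^n|z^n)$ exceeds the threshold $\tau = e^{-\gamma}/M_n$: the part below the threshold contributes at most $|\mathcal{C}_{z^n}|\,\tau \le e^{-\gamma}$, while the part above the threshold lies in $\{x^n : \log \frac{1}{P_{X^n|Z^n}(x^n|z^n)} < \log M_n + \gamma\}$. Averaging over $z^n$ gives the displayed bound.

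\textbf{Step 2 (second-order asymptotics).} Choose $\gamma = \gamma_n$ with $\gamma_n \to \infty$ and $\gamma_n = o(\sqrt{n})$ (e.g.\ $\gamma_n = n^{1/4}$), so that $e^{-\gamma_n} \to 0$. Because $(X^n,Z^n)$ is i.i.d.\ and the alphabets are finite, $W_i := \log \frac{1}{P_{X|Z}(X_i|Z_i)}$ are i.i.d.\ and bounded with $E[W_i] = H(X|Z)$ and $\mathrm{Var}[W_i] = \sigma^2$; assuming $\sigma > 0$, the central limit theorem together with the continuity of $G$ (P\'olya's theorem, upgrading pointwise to uniform convergence) yields
\begin{eqnarray*}
\sup_{t \in \mathbb{R}} \left| P_{X^n Z^n}\!\left( \log \frac{1}{P_{X^n|Z^n}(X^n|Z^n)} \le n H(X|Z) + \sqrt{n}\,\sigma t \right) - G(t) \right| \longrightarrow 0.
\end{eqnarray*}
Put $r_n = \frac{1}{\sqrt{n}} \log \frac{M_n}{e^{n H(X|Z)}}$. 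Then $\log M_n + \gamma_n = n H(X|Z) + \sqrt{n}\,\sigma t_n$ with $t_n = (r_n + \gamma_n/\sqrt{n})/\sigma = r_n/\sigma + o(1)$, so Step 1 gives $1 - \varepsilon(\Phi_n) \le G(t_n) + o(1)$.

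\textbf{Step 3 (conclusion) and the main obstacle.} Suppose $\limsup_n \varepsilon(\Phi_n) < 1 - G(b/\sigma)$, so $1 - \varepsilon(\Phi_n) \ge G(b/\sigma) + \delta$ for some $\delta > 0$ and all large $n$. Combined with Step 2, $G(t_n) \ge G(b/\sigma) + \delta/2$ eventually; since $G$ is strictly increasing and continuous (and $G(b/\sigma) + \delta/2 < 1$), this forces $t_n \ge b/\sigma + \epsilon_0$ for a fixed $\epsilon_0 > 0$, hence $r_n = \sigma t_n - \gamma_n/\sqrt{n} \ge b + \sigma\epsilon_0 - o(1) \ge b$ for all large $n$, i.e.\ $\liminf_n r_n \ge b$. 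The degenerate case $\sigma = 0$ (where $W_i \equiv H(X|Z)$) is disposed of directly from the Step~1 bound: $G(b/\sigma) \in \{0,1\}$, making the hypothesis either vacuous or equivalent to $\limsup_n \varepsilon(\Phi_n) < 1$, which forces $\log M_n + \gamma_n > n H(X|Z)$ eventually and hence $\liminf_n r_n \ge 0 \ge b$. The only point requiring care is that the argument $t_n$ of $G$ depends on the (arbitrary) code and drifts with $n$; this is precisely why I need the \emph{uniform} CLT in Step 2 rather than its pointwise version, and it is the step I expect to be the main obstacle to state cleanly. The counting bound and the threshold split of Step 1 are otherwise routine.
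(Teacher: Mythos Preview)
Your proof is correct and follows essentially the same strategy as the paper: your Step~1 one-shot bound is exactly the lemma the paper invokes (due to Miyake--Kanaya), and both arguments then pass to the limit via the CLT for the i.i.d.\ sum $\sum_i \log \frac{1}{P_{X|Z}(X_i|Z_i)}$.

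The one technical difference worth flagging is that the paper sidesteps what you call your ``main obstacle'' entirely. Instead of tracking the code-dependent threshold $t_n = r_n/\sigma + o(1)$ and appealing to uniform convergence (P\'olya), the paper argues by contradiction: assuming $\liminf_n r_n \le b - 2\gamma$, it passes to a subsequence with $r_{n_i} \le b - \gamma$ and applies the one-shot bound with the \emph{fixed} threshold $\alpha_{n_i} = n_i H(X|Z) + b\sqrt{n_i}$, so the slack term becomes $M_{n_i} e^{-\alpha_{n_i}} \le e^{-\gamma\sqrt{n_i}} \to 0$ and the CLT is only needed at the single point $b/\sigma$. Your direct route is equally valid and perhaps more transparent; the paper's subsequence trick just shows that the uniformity you worried about is not actually needed.
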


This theorem is a straight forward consequence
of the central limit theorem, and we show a proof
in \ref{appendix-0}.

\textchange{In order to show the relation between 
privacy amplification and the Slepian-Wolf code,
we consider the situation in which Alice and Bob
share the principal source $X^n$ and Eve has
the side-information $Z^n$. Then, Alice and Bob
use the encoder $\phi_n$ as a function for 
privacy amplification, and we  regard
the output $S_n = \phi_n(X^n)$ of the encoder
as a secret key.}
The following proposition states that 
the encoders of compression limit achieving codes
can be used as functions for privacy
amplification if we employ the normalized
divergence criterion.

\begin{proposition}
\label{SW-folklore-theorem}
If a sequence of codes $\{ \Phi_n = (\phi_n,\psi_n) \}$
satisfies Eqs.~(\ref{eq:compression-limit-achieving-1})
and (\ref{eq:compression-limit-achieving-2}),
then we have
\begin{eqnarray*}
\lim_{n \to \infty} \frac{1}{n} D(\phi_n) = 0.
\end{eqnarray*} 
\theoremfin
\end{proposition}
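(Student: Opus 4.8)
The plan is to reduce the divergence to a single conditional entropy in closed form and then control that quantity with Fano's inequality. Writing $S_n = \phi_n(X^n)$ for the distilled key and expanding the divergence directly, using $P_{U_n}(s) = 1/M_n$, one obtains the identity
\[
D(\phi_n) = D(P_{S_n Z^n} \| P_{U_n} \times P_{Z^n}) = \log M_n - H(S_n \mid Z^n).
\]
Since the divergence is non-negative, it suffices to prove a matching upper bound; dividing by $n$ and invoking the rate hypothesis~(\ref{eq:compression-limit-achieving-1}), the whole proposition reduces to showing
\[
\liminf_{n \to \infty} \frac{1}{n} H(S_n \mid Z^n) \ge H(X \mid Z).
\]

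For the lower bound on $H(S_n\mid Z^n)$, I would use that $S_n$ is a deterministic function of $X^n$, so the chain rule gives $H(X^n \mid Z^n) = H(S_n \mid Z^n) + H(X^n \mid S_n, Z^n)$, while the i.i.d.\ assumption gives $H(X^n \mid Z^n) = n H(X \mid Z)$. The decoder $\psi_n$ recovers $X^n$ from $(S_n,Z^n) = (\phi_n(X^n), Z^n)$ with error probability $\varepsilon(\Phi_n)$, so Fano's inequality yields
\[
H(X^n \mid S_n, Z^n) \le h(\varepsilon(\Phi_n)) + \varepsilon(\Phi_n)\, n \log |{\cal X}|,
\]
where $h(\cdot)$ is the binary entropy function. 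Combining these three facts gives $\frac{1}{n} H(S_n \mid Z^n) \ge H(X \mid Z) - \frac{1}{n} h(\varepsilon(\Phi_n)) - \varepsilon(\Phi_n) \log |{\cal X}|$. Letting $n \to \infty$ and using~(\ref{eq:compression-limit-achieving-2}) (so $\varepsilon(\Phi_n) \to 0$ and $h(\varepsilon(\Phi_n))$ stays bounded), together with~(\ref{eq:compression-limit-achieving-1}) for $\frac1n \log M_n$, all the correction terms vanish and we conclude $\limsup_{n\to\infty} \frac1n D(\phi_n) \le 0$, hence the result.

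There is no deep obstacle here: the argument is a standard Fano-plus-chain-rule estimate. The only points requiring a little care are the opening identity $D(\phi_n) = \log M_n - H(S_n \mid Z^n)$, which relies on comparing against the uniform $P_{U_n}$ rather than the true marginal $P_{S_n}$, and the observation $H(X^n, S_n \mid Z^n) = H(X^n \mid Z^n)$, valid because $S_n$ carries no information beyond $X^n$. It is worth emphasising that the $\frac1n$ normalisation is precisely what annihilates the $O(\varepsilon(\Phi_n)\, n \log|{\cal X}|)$ Fano term; without normalisation this same term need not vanish, and the tension between the vanishing of the error probability and the growth of this term is exactly the phenomenon that the later sections exploit to prove the negative results for the unnormalised criteria.
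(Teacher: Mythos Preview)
Your argument is correct: the identity $D(\phi_n)=\log M_n-H(S_n\mid Z^n)$, the chain rule $H(X^n\mid Z^n)=H(S_n\mid Z^n)+H(X^n\mid S_n,Z^n)$, and Fano's inequality combine exactly as you describe to give the result. The paper does not actually write out a proof of this proposition---it only remarks that it ``can be proved almost in a similar manner to \cite[Theorem~1]{muramatsu:06b}''---but your Fano-plus-chain-rule estimate is the standard route and matches what that reference does, so there is no substantive difference to report.
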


This proposition can be proved almost in a similar
manner to \cite[Theorem 1]{muramatsu:06b}.
\textchange{
The proposition can be also dirived as a
special case of Eq.~(\ref{eq:normalized-upper-bound}) 
in Remark \ref{remark:growth-rate}}.
Note that Proposition \ref{SW-folklore-theorem}
can be regarded as the Slepian-Wolf version
of the folklore theorem \cite{han:05}
(see also \cite[Theorem 2.6.4]{han:book}).

\section{Divergence Criterion}
\label{main-results-2}

\subsection{Statement of Result}

In this section, we show our main result
concerning the divergence criterion,
which is proved in Section \ref{subsec:proof-div}.
In Section \ref{subsec:PA-SW}, we showed that
the encoders of compression limit achieving codes
can be used as functions for secure privacy
amplification in the sense of 
the normalized divergence criterion. 
The following theorem
states that the divergence actually grows 
infinitely in the order of $\sqrt{n}$ (Eq.~(\ref{eq:root-n-diverge})),
which suggests that Eve might know a large part
of the key.

\begin{theorem}
\label{theorem-divergence-bound}
Suppose that a sequence of functions,
$f_n: {\cal X}^n \to \{1,\ldots,M_n \}$ for
$n = 1,2,\ldots$,
satisfies Eq.~(\ref{eq:compression-limit-achieving-1}), 
and let
\begin{eqnarray*}
b = \liminf_{n \to \infty} \frac{1}{\sqrt{n}}
  \log \frac{M_n}{e^{nH(X|Z)}}.
\end{eqnarray*}
Then, we have
\begin{eqnarray}
\label{eq:divergence-bound}
\liminf_{n \to \infty} \frac{1}{\sqrt{n}}
  D(f_n) 
 \ge \int_{- \infty}^{\frac{b}{\sigma}} (b - \sigma u) g(u) du,
\end{eqnarray}
where 
\begin{eqnarray*}
g(u) = \frac{1}{\sqrt{2 \pi}} e^{-u^2/2}
\end{eqnarray*}
is the density function of the Gaussian distribution
with mean $0$ and variance $1$,
and where $\sigma^2$ is the variance defined
in Eq.~(\ref{eq:variance}).
\theoremfin
\end{theorem}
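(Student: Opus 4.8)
The plan is to lower-bound $D(f_n)$ directly from its definition and reduce the estimate to a central-limit computation for the random variable $\frac1{\sqrt n}\log\frac{1}{P_{X^n|Z^n}(X^n|Z^n)}$. Write $D(f_n) = \sum_{s,z^n} P_{S_nZ^n}(s,z^n)\log\frac{P_{S_nZ^n}(s,z^n)}{P_{U_n}(s)P_{Z^n}(z^n)}$, and note that $P_{S_nZ^n}(s,z^n) = \sum_{x^n\in f_n^{-1}(s)}P_{X^nZ^n}(x^n,z^n) \ge P_{X^nZ^n}(x^n,z^n)$ for each $x^n\in f_n^{-1}(s)$. The key structural move is to bound the divergence from below by a ``min-entropy–type'' expression: conditioned on $z^n$, the mass that $f_n$ puts on the key value $s=f_n(x^n)$ is at least as large as the conditional mass of $x^n$ itself, so the sum is bounded below by restricting attention to those $(x^n,z^n)$ for which $P_{X^n|Z^n}(x^n|z^n)$ is atypically large, i.e.\ $\log\frac{1}{P_{X^n|Z^n}(x^n|z^n)} \le nH(X|Z) + \sqrt n\,t$ for a suitable threshold $t<b/\sigma$. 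On such pairs, the ratio $\frac{P_{S_nZ^n}(s,z^n)}{P_{U_n}(s)P_{Z^n}(z^n)}$ is at least $M_n\cdot P_{X^n|Z^n}(x^n|z^n) \ge e^{nH(X|Z)}\cdot e^{\sqrt n\,b + o(\sqrt n)}\cdot e^{-nH(X|Z)-\sqrt n\,t}$, whose logarithm is $\sqrt n(b - \sigma u)$ after the substitution $t=\sigma u$; integrating the probability weight of the event against this logarithm and invoking the central limit theorem for $\sigma^2 = \mathrm{Var}[\log\frac{1}{P_{X|Z}(X|Z)}]$ produces exactly $\int_{-\infty}^{b/\sigma}(b-\sigma u)g(u)\,du$.

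More carefully, I would fix $\varepsilon>0$, use the definition of $b$ as a $\liminf$ to get $\log M_n \ge nH(X|Z) + \sqrt n(b-\varepsilon)$ for all large $n$, and then split the $(s,z^n)$-sum according to whether the representative $x^n$ lies in the set $\mathcal T_n(u_0) = \{(x^n,z^n) : \log\frac{1}{P_{X^n|Z^n}(x^n|z^n)} \le nH(X|Z) + \sqrt n\,\sigma u_0\}$ for a fixed $u_0 < (b-\varepsilon)/\sigma$. Discarding the complementary part (its contribution to $D(f_n)$ can be shown to be bounded below by a term that is $o(\sqrt n)$, e.g.\ using that $x\log x \ge -1/e$ to control any negative contributions, together with the fact that the log-ratio is $\ge \log M_n + \log P_{X^n|Z^n} \ge -\text{const}\cdot n$ only on a set of vanishing probability), the surviving sum is at least $\sum_{(x^n,z^n)\in\mathcal T_n(u_0)} P_{X^nZ^n}(x^n,z^n)\cdot\big(\log M_n - nH(X|Z) - \sqrt n\,\sigma u_0\big)$; but this crude single threshold only gives $(b-\sigma u_0)$ times $P(\mathcal T_n(u_0))$, not the integral. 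To recover the full integral I would instead partition the range of $\log\frac1{P_{X^n|Z^n}}$ into thin strips $u\in[u_j,u_{j+1})$, bound the log-ratio on each strip from below by $\sqrt n(b-\varepsilon-\sigma u_{j+1})$, sum over strips, pass to the limit using the CLT so that $P(\text{strip }j)\to \int_{u_j}^{u_{j+1}}g(u)\,du$, and finally let the mesh and $\varepsilon$ tend to $0$ to obtain $\liminf_n \frac1{\sqrt n}D(f_n) \ge \int_{-\infty}^{b/\sigma}(b-\sigma u)g(u)\,du$ (the integrand being taken as $0$ where $b-\sigma u<0$, i.e.\ beyond $b/\sigma$).

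The main obstacle I anticipate is the control of the ``discarded'' region, i.e.\ ensuring that the terms of $D(f_n)$ coming from pairs with atypically \emph{small} conditional probability (or from key values $s$ whose preimage is spread in a way that makes the log-ratio negative) do not drag the $\liminf$ below the claimed bound. Since $D(f_n)\ge 0$ globally but individual summands can be negative, one needs a genuine argument — not just nonnegativity of the total — that the negative part is $o(\sqrt n)$; the natural route is the elementary inequality $p\log(p/q) \ge -q/e$ summand-wise (giving total negative contribution bounded by a constant), or alternatively grouping each $s$ and using convexity ($\sum_{x^n\in f_n^{-1}(s)}$ of $P_{X^n|Z^n}$ versus $\log$) to see that the conditional-on-$z^n$ sum over $s$ is itself a divergence and hence nonnegative, so that only the typical-set lower bound matters. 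A secondary technical point is that $\sigma^2$ could be $0$ (degenerate $P_{X|Z}$), in which case $G(\cdot)$ and $g(\cdot)$ must be interpreted as a point mass at $0$ and the integral reads $b^+ = \max(b,0)$; I would handle this degenerate case separately or note it in a remark. Everything else — the CLT application, the substitution $t=\sigma u$, and the manipulation of the Riemann sums — is routine.
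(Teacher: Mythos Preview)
Your overall strategy---restrict to the typical set $\mathcal T_n=\{(x^n,z^n): \log(1/P_{X^n|Z^n})\le nH(X|Z)+\sqrt n\,b\}$, use $P_{S_n|Z^n}(f_n(x^n)|z^n)\ge P_{X^n|Z^n}(x^n|z^n)$ there, and invoke the CLT---is exactly what the paper does. But the paper organizes it much more cleanly and, crucially, handles the step you flag as the ``main obstacle'' with a device you do not have. The paper writes $D(f_n)=\log M_n-H(S_n|Z^n)$ and proves a single Lemma bounding $H(S_n|Z^n)$ from above: it introduces the auxiliary random variable $S_n'=f_n'(X^n,Z^n)$ that equals $X^n$ on $\mathcal T_n$ and $f_n(X^n)$ on $\mathcal T_n^c$, observes $S_n$ is a function of $S_n'$ so $H(S_n|Z^n)\le H(S_n'|Z^n)$, and bounds the $\mathcal T_n^c$ piece of $H(S_n'|Z^n)$ by the log-sum inequality, giving exactly $P(\mathcal T_n^c)[\log M_n-\log P(\mathcal T_n^c)]$. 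After subtracting from $\log M_n$, the $\log M_n$ on $\mathcal T_n^c$ cancels, leaving only $P(\mathcal T_n^c)\log P(\mathcal T_n^c)\ge -1/e$. No strip partition is needed: the $\mathcal T_n$ contribution is computed directly as $P(\mathcal T_n)\cdot\frac{1}{\sqrt n}\log\frac{M_n}{e^{nH(X|Z)}}-\sigma\,\mathbb E\bigl[V_n\mathbf 1_{V_n\le b/\sigma}\bigr]$ with $V_n=\frac{1}{\sigma\sqrt n}(\log(1/P_{X^n|Z^n})-nH(X|Z))$, and both pieces pass to the limit by the CLT (the truncated expectation converges because $\mathbb E[V_n^2]=1$ uniformly).

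By contrast, your three proposed fixes for the complementary region do not quite land as written. The bound $p\log(p/q)\ge -q/e$ is stated at the $(s,z^n)$ level, but your decomposition into $\mathcal T_n$ and $\mathcal T_n^c$ is at the $(x^n,z^n)$ level; a single $(s,z^n)$ may receive mass from both regions, so the ``negative part $\ge -1/e$'' does not immediately bound the $\mathcal T_n^c$ sum. Your second remark---that the log-ratio is $\ge -\mathrm{const}\cdot n$ only on a set of vanishing probability---is false: $\mathcal T_n^c$ has limiting probability $1-G(b/\sigma)>0$, and on it the log-ratio $\log[M_nP_{X^n|Z^n}]$ can be of order $-\sqrt n$, giving a contribution that is \emph{not} $o(\sqrt n)$ without further argument. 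Your third remark---``the conditional-on-$z^n$ sum over $s$ is a divergence, hence nonnegative, so only the typical part matters''---is also not a valid implication: nonnegativity of the total does not let you discard the $\mathcal T_n^c$ portion after you have already lower-bounded the $\mathcal T_n$ portion by something possibly exceeding the full $D(f_n)$. A correct direct argument exists (group the $\mathcal T_n^c$ contribution by $(s,z^n)$, use $P_{S_n|Z^n}(s|z^n)\ge q_{s,z^n}:=\sum_{x^n\in f_n^{-1}(s),\,(x^n,z^n)\in\mathcal T_n^c}P_{X^n|Z^n}(x^n|z^n)$, then apply the max-entropy bound for a sub-distribution summing to $r(z^n)=P(\mathcal T_n^c\mid z^n)$ over $M_n$ symbols to get $\sum_s q_{s,z^n}\log[M_nq_{s,z^n}]\ge r(z^n)\log r(z^n)\ge -1/e$), but this is precisely what the paper's auxiliary-variable lemma packages in one line via data processing plus log-sum. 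In short: your plan is on the right track, but the treatment of the discarded region is the genuine content of the proof, and the paper's Lemma is the clean way to dispatch it.
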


The theorem can be regarded as a generalization
of \cite[Theorem 8]{hayashi:08} for the Slepian-Wolf
code.

Suppose that the Slepian-Wolf code sequence 
$\{ \Phi_n = (\phi_n,\psi_n) \}$ satisfies
Eq.~(\ref{eq:compression-limit-achieving-1}) and
\begin{eqnarray}
\label{eq:smaller-than-one}
\limsup_{n \to \infty} \varepsilon(\Phi_n) < 1.
\end{eqnarray}
Then, there exists a real number $b \in \mathbb{R}$
such that
\begin{eqnarray*}
\limsup_{n \to \infty} \varepsilon(\Phi_n) 
 < 1 - G\left( \frac{b}{\sigma} \right).
\end{eqnarray*}
Since the right hand side of 
Eq.~(\ref{eq:divergence-bound}) is
an increasing function of $b$ and is
positive for any $b \in \mathbb{R}$, 
Theorem \ref{proposition:sw-second-order-converse}
and Theorem \ref{theorem-divergence-bound} imply that
there exists a constant $K > 0$ and an integer $n_0$ such that
\begin{eqnarray}
\label{eq:root-n-diverge}
D(\phi_n) \ge \sqrt{n} K
\end{eqnarray}
for every $n \ge n_0$.

\begin{remark}
\label{remark:growth-rate}
\textchange{
When the Slepian-Wolf code sequence satisfies
Eq.~(\ref{eq:compression-limit-achieving-1}), i.e.,
the first order rate is equal to the compression limit, 
and also satisfies Eq.~(\ref{eq:smaller-than-one}), we showed above that
the divergence divided by $\sqrt{n}$ is lower bounded by a constant
asymptotically, i.e.,
the divergence grows infinitely in the order of $\sqrt{n}$.
On the other hand,
when the first order rate is strictly larger than the compression
limit, we can show that the normalized divergence is lower bounded by
a constant asymptotically (Eq.~(\ref{eq:normalized-lower-bound})), i.e.,
the divergence grows infinitely
in the order of $n$. 
Furthermore, for a given first order rate, we can show that
the lower bound on the normalized divergence can be
achieved by using encoders of Slepian-Wolf 
codes (Eq.~(\ref{eq:normalized-upper-bound})).
It should be noted that 
Proposition \ref{SW-folklore-theorem} 
can be derived as a special case of 
Eq.~(\ref{eq:normalized-upper-bound}).
Eqs~(\ref{eq:normalized-lower-bound}) and (\ref{eq:normalized-upper-bound})
are proved in Appendix \ref{appendix-proof-of-normalized}.}
 
\textchange{
Let
$f_n:{\cal X}^n \to \{1,\ldots,M_n \}$ for
$n=1,2,\ldots$, 
be a sequence of functions (not necessarily encoders
of Slepian-Wolf codes) that satisfies
\begin{eqnarray*}
R = \liminf_{n \to \infty} \frac{1}{n} \log M_n > H(X|Z).
\end{eqnarray*}
Then, the inequality
\begin{eqnarray}
\label{eq:normalized-lower-bound}
\liminf_{n \to \infty} \frac{1}{n} D(f_n)
 \ge R - H(X|Z)
\end{eqnarray}
implies that the divergence grows infinitely
in the order of $n$.}

\textchange{
Let $\{ \Phi_n = (\phi_n,\psi_n) \}$ be a 
sequence of Slepian-Wolf codes satisfying
Eq.~(\ref{eq:definition-of-achievable}).
Then, we have
\begin{eqnarray}
\label{eq:normalized-upper-bound}
\limsup_{n \to \infty} \frac{1}{n} D(\phi_n) \le R - H(X|Z).
\end{eqnarray}
}

\textchange{
Eqs.~(\ref{eq:normalized-lower-bound}) 
and (\ref{eq:normalized-upper-bound})
clarify the optimal trade-off between the (first order) rate
of the secret key and the normalized divergence.
Eq.~(\ref{eq:normalized-upper-bound}) also states that
the optimal trade-off can be achieved by encoders
of Slepian-Wolf codes.
Evaluation of the trade-off between the so-called equivocation rate,
which is essentially equivalent to the normalized divergence,
and the rate of transmitted message 
was well studied in the context of the
wire-tap channel in the literatures (e.g.~\cite{wyner:75,csiszar:78}).
\theoremfin }
\end{remark}

\subsection{Proof of Theorem \ref{theorem-divergence-bound}}
\label{subsec:proof-div}

In order to show a proof of 
Theorem \ref{theorem-divergence-bound}, we 
need the following lemma.

\begin{lemma}
\label{lemma-divergence-bound}
Let 
\begin{eqnarray*}
\lefteqn{ {\cal T}_n = 
 \left\{ (x^n,z^n) \mymid  \phantom{\frac{1}{\sigma \sqrt{n}}} \right. } \\
&& \left. \frac{1}{\sigma \sqrt{n}} 
  \left(\log \frac{1}{P_{X^n|Z^n}(x^n|z^n)} - nH(X|Z) \right)
  \le \frac{b}{\sigma} \right\}.
\end{eqnarray*}
Then, we have
\begin{eqnarray*}
\lefteqn{ H(S_n|Z^n) } \\
&\le& \sum_{(x^n,z^n) \in {\cal T}_n}
  P_{X^n Z^n}(x^n,z^n) 
   \log \frac{1}{P_{X^n|Z^n}(x^n|z^n)} \\
&& + P_{X^nZ^n}({\cal T}_n^c) [
 \log M_n - \log P_{X^n Z^n}({\cal T}_n^c) ].
\end{eqnarray*}
\theoremfin
\end{lemma}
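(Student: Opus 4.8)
The plan is to rewrite $H(S_n|Z^n)$ in a form that can be split over ${\cal T}_n$ and ${\cal T}_n^c$. Since $S_n=f_n(X^n)$ and, by Eq.~(\ref{eq:definition-of-psz}), $P_{S_n|Z^n}(s|z^n)=\sum_{x^n\in f_n^{-1}(s)}P_{X^n|Z^n}(x^n|z^n)$, one can regroup the defining sum of the conditional entropy (over $s$) into a sum over $x^n\in f_n^{-1}(s)$, which gives
\[
H(S_n|Z^n)=-\sum_{(x^n,z^n)}P_{X^nZ^n}(x^n,z^n)\log P_{S_n|Z^n}(f_n(x^n)|z^n).
\]
I would then split this sum into the contribution from ${\cal T}_n$ and the contribution from ${\cal T}_n^c$ and estimate the two separately.

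For the sum over ${\cal T}_n$ the bound is immediate: $P_{S_n|Z^n}(f_n(x^n)|z^n)$ is a sum of nonnegative terms one of which is $P_{X^n|Z^n}(x^n|z^n)$, so $-\log P_{S_n|Z^n}(f_n(x^n)|z^n)\le\log\frac{1}{P_{X^n|Z^n}(x^n|z^n)}$, and this yields exactly the first term on the right-hand side of the lemma.

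For the sum over ${\cal T}_n^c$, put $\beta=P_{X^nZ^n}({\cal T}_n^c)$ and regard $P_{X^nZ^n}(\cdot\mid{\cal T}_n^c)$ as a probability distribution on ${\cal T}_n^c$. Applying Jensen's inequality to the concave function $\log$ gives
\begin{eqnarray*}
\lefteqn{-\sum_{(x^n,z^n)\in{\cal T}_n^c}\frac{P_{X^nZ^n}(x^n,z^n)}{\beta}\log P_{S_n|Z^n}(f_n(x^n)|z^n)}\\
&&\le\log\left(\sum_{(x^n,z^n)\in{\cal T}_n^c}\frac{P_{X^nZ^n}(x^n,z^n)}{\beta\,P_{S_n|Z^n}(f_n(x^n)|z^n)}\right),
\end{eqnarray*}
so it suffices to show $\sum_{(x^n,z^n)\in{\cal T}_n^c}P_{X^nZ^n}(x^n,z^n)/P_{S_n|Z^n}(f_n(x^n)|z^n)\le M_n$. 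I would prove this by first grouping the sum over $z^n$ with weight $P_{Z^n}(z^n)$, and then grouping the remaining inner sum over the value $s=f_n(x^n)$: for each fixed $s$ and $z^n$ the numerator contribution $\sum_{x^n\in f_n^{-1}(s),\,(x^n,z^n)\in{\cal T}_n^c}P_{X^n|Z^n}(x^n|z^n)$ is at most $\sum_{x^n\in f_n^{-1}(s)}P_{X^n|Z^n}(x^n|z^n)=P_{S_n|Z^n}(s|z^n)$, so each of the at most $M_n$ values of $s$ contributes at most $1$. Multiplying the resulting bound $\log(M_n/\beta)$ by $\beta$ and adding the ${\cal T}_n$ estimate produces the claimed inequality.

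The only delicate point is the ${\cal T}_n^c$ term: the Jensen step must be set up with the correctly renormalized distribution $P_{X^nZ^n}(\cdot\mid{\cal T}_n^c)$, and in the counting argument one must track the values $s$ with $P_{S_n|Z^n}(s|z^n)=0$, which drop out of all sums because their numerator contribution is then also $0$. The algebraic rewriting of $H(S_n|Z^n)$ and the ${\cal T}_n$ estimate are routine.
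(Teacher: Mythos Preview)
Your proposal is correct. Both your approach and the paper's reach the same bound, but the routes differ. The paper introduces an auxiliary random variable $S_n'=f_n'(X^n,Z^n)$ taking values in ${\cal M}_n\cup{\cal X}^n$, defined to equal $x^n$ on ${\cal T}_n$ and $f_n(x^n)$ on ${\cal T}_n^c$; since $S_n$ is a deterministic function of $S_n'$, one has $H(S_n|Z^n)\le H(S_n'|Z^n)$, and $H(S_n'|Z^n)$ decomposes \emph{exactly} into the ${\cal T}_n$ contribution plus a term over $(s,z^n)\in{\cal M}_n\times{\cal Z}^n$ that is bounded via the log-sum inequality. You instead stay with $S_n$ throughout, rewrite $H(S_n|Z^n)$ as an $(x^n,z^n)$-expectation, and bound the two pieces separately: a pointwise inequality on ${\cal T}_n$ and Jensen plus a counting estimate on ${\cal T}_n^c$. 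Your argument is slightly more elementary in that it avoids the refinement construction and the data-processing step; the paper's version is a bit cleaner on ${\cal T}_n^c$ because the log-sum inequality packages your Jensen step and the subsequent $\sum\le M_n$ count into a single line (the sums $\sum a_i$ and $\sum b_i$ there are exactly $P_{X^nZ^n}({\cal T}_n^c)$ and $M_n$). Either way the same inequality drops out.
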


\begin{proof}
Let 
\begin{eqnarray*}
{\cal M}_n^\prime = {\cal M}_n \cup {\cal X}^n, 
\end{eqnarray*}
and let 
$f_n^\prime: {\cal X}^n \times {\cal Z}^n \to {\cal M}_n^\prime$
be the function defined by
\begin{eqnarray*}
f_n^\prime(x^n,z^n) = \left\{
 \begin{array}{ll}
 f_n(x^n) & \mbox{if } (x^n,z^n) \notin {\cal T}_n \\
 x^n & \mbox{if } (x^n,z^n) \in {\cal T}_n
\end{array} \right..
\end{eqnarray*}
We set the random variable $S_n^\prime = f_n^\prime(X^n,Z^n)$.
Then, we have
\begin{eqnarray}
\lefteqn{ H(S_n^\prime|Z^n) } \nonumber \\
 &=& \sum_{(x^n, z^n) \in {\cal T}_n}
  P_{X^nZ^n}(x^n,z^n) \log \frac{1}{P_{X^n|Z^n}(x^n|z^n)} \nonumber \\
 && + \sum_{(s,z^n) \in {\cal M}_n \times {\cal Z}^n}
  P_{S_n^\prime Z^n}(s,z^n) \log \frac{1}{P_{S_n^\prime|Z^n}(s|z^n)}. 
\nonumber \\
 \label{eq:proof-lemma-div-bound-1}
\end{eqnarray}

By using the log-sum inequality \cite{cover},
we can upper bound the last term in Eq.~(\ref{eq:proof-lemma-div-bound-1}) as
\begin{eqnarray}
\lefteqn{ \sum_{(s,z^n) \in {\cal M}_n \times {\cal Z}^n}
  P_{S_n^\prime Z^n}(s,z^n) 
  \log \frac{P_{Z^n}(z^n)}{P_{S_n^\prime Z^n}(s,z^n)} } \nonumber \\
&\le& P_{X^nZ^n}({\cal T}_n^c) [
 \log M_n - \log P_{X^n Z^n}({\cal T}_n^c) ].
\label{eq:proof-lemma-div-bound-2}
\end{eqnarray}

Let $f_n^{\prime\prime}: {\cal M}_n^\prime \to {\cal M}_n$
be the function defined by
\begin{eqnarray*}
f_n^{\prime\prime}(s) = \left\{ \begin{array}{ll}
  s & \mbox{if } s \in {\cal M}_n \\
 f_n(s) & \mbox{if } s \in {\cal X}^n
\end{array} \right. .
\end{eqnarray*}
Then, we have $S_n = f_n^{\prime\prime}(S_n^\prime)$.
Since the conditional entropy does not increase
by a function \cite{cover}, by combining
Eqs.~(\ref{eq:proof-lemma-div-bound-1}) and
(\ref{eq:proof-lemma-div-bound-2}), we have
the assertion of the lemma.
\end{proof}

\noindent{\em Proof of Theorem \ref{theorem-divergence-bound} }

By using Lemma \ref{lemma-divergence-bound}, we have
\textchange{
\begin{eqnarray*}
\lefteqn{ \frac{1}{\sqrt{n}} D(f_n) } \\
 &=& \frac{1}{\sqrt{n}} \left[
  \log M_n - H(S_n|Z^n) \right] \\
 &\ge& \frac{1}{\sqrt{n}} \left[ \log M_n 
   \phantom{\frac{1}{P_{X^n|Z^n}(x^n|z^n)}} \right. \\
 && - \sum_{(x^n,z^n) \in {\cal T}_n}
  P_{X^n Z^n}(x^n,z^n) 
   \log \frac{1}{P_{X^n|Z^n}(x^n|z^n)} \\
 && \hspace{-7mm} \left. \phantom{\frac{1}{P_X}} - P_{X^nZ^n}({\cal T}_n^c) \{
 \log M_n - \log P_{X^n Z^n}({\cal T}_n^c) \} \right] \\
 &=& P_{X^nZ^n}({\cal T}_n) \frac{1}{\sqrt{n}} \log \frac{M_n}{e^{nH(X|Z)}} \\
 && - \sigma \sum_{(x^n,z^n) \in {\cal T}_n} P_{X^n Z^n}(x^n,z^n) \\
  && \frac{1}{\sigma \sqrt{n}} \left( \log \frac{1}{P_{X^n|Z^n}(x^n|z^n)}
   - n H(X|Z) \right) \\
 && + \frac{1}{\sqrt{n}} P_{X^n Z^n}({\cal T}_n^c) 
  \log P_{X^nZ^n}({\cal T}_n^c).
\end{eqnarray*}}
By taking the limit of both sides and using 
the central limit theorem with
respect to the cumulative distribution function
\begin{eqnarray*}
\hspace{-5mm} \Pr\left\{
 \frac{1}{\sigma \sqrt{n}} 
  \left( \log \frac{1}{P_{X^n|Z^n}(X^n|Z^n)} - nH(X|Z) \right)
  \le u \right\},
\end{eqnarray*}
we have
\begin{eqnarray*}
\lefteqn{ \liminf_{n \to \infty} \frac{1}{\sqrt{n}} D(f_n) } \\
&\ge& b G\left(\frac{b}{\sigma} \right) 
  - \sigma \int_{- \infty}^{\frac{b}{\sigma}} u g(u) du \\
&=& \int_{- \infty}^{\frac{b}{\sigma}} (b - \sigma u) g(u) du,
\end{eqnarray*} 
which completes the proof
\qed

\section{Variational Distance Criterion}
\label{main-results}

\subsection{Statement of Results}

In this section, we show our main results
concerning the variational distance criterion, which are proved in 
Sections \ref{subsec:proof-1} and \ref{subsec:proof-2}. First, we define
the quantity $\delta(P_{X^n Z^n})$ as follows.

\begin{definition}
Let $1 \le M_n \le |{\cal X}|^n$
be an integer, and 
${\cal C}_n = \{ {\cal C}_{z^n} \}_{z^n \in {\cal Z}^n}$
be a family of sets such that
each ${\cal C}_{z^n} \subset {\cal X}^n$
satisfies $|{\cal C}_{z^n}| = M_n$, where 
$|{\cal A}|$ means the cardinality of
a set ${\cal A}$.
We define the distribution $P_{{\cal C}_n}$
on ${\cal X}^n \times {\cal Z}^n$ as
\begin{eqnarray*}
P_{{\cal C}_n}(x^n,z^n) = \left\{
\begin{array}{ll}
\frac{1}{M_n} P_{Z^n}(z^n) & \mbox{if } x^n \in {\cal C}_{z^n} \\
0 & \mbox{else}
\end{array}
\right. .
\end{eqnarray*}
Then, we define
\begin{eqnarray}
\label{eq:definition-of-delta}
\delta(P_{X^n Z^n}) = 
  \min_{{\cal C}_n} d(P_{X^n Z^n}, P_{{\cal C}_n}),
\end{eqnarray}
where the minimization is taken over all possible
choices of ${\cal C}_n$ for arbitrary
$1 \le M_n \le |{\cal X}|^n$.
\theoremfin
\end{definition}

In Section \ref{subsec:PA-SW}, we showed that
the encoders of compression limit achieving codes
can be used as functions for secure privacy
amplification in the sense of 
the normalized divergence criterion. 
However, the following Theorem \ref{theorem-trade-off} shows 
a trade-off (with some exceptions) between 
the error probability
$\varepsilon(\Phi_n)$ and the security parameter
$\Delta(\phi_n)$ for any code $\Phi_n$. 

\begin{theorem}
\label{theorem-trade-off}
For arbitrary Slepian-Wolf code $\Phi_n = (\phi_n,\psi_n)$, we have
\begin{eqnarray*}
\varepsilon(\Phi_n) 
  + \Delta(\phi_n) \ge
\delta(P_{X^n Z^n}).
\end{eqnarray*}
\theoremfin
\end{theorem}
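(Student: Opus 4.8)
The plan is to connect the triple $(\phi_n, \psi_n)$ with a specific admissible family of sets ${\cal C}_n$ in the definition of $\delta(P_{X^n Z^n})$, and then to compare $P_{X^n Z^n}$ with $P_{{\cal C}_n}$ via the triangle inequality for the variational distance. The natural candidate is the ``decoding-region'' family: for each $z^n$, put ${\cal C}_{z^n} = \{ \psi_n(s, z^n) : s \in {\cal M}_n \}$. If the decoder is injective in $s$ for each fixed $z^n$ (which we may assume without loss of generality, since a non-injective decoder can only be improved), then $|{\cal C}_{z^n}| = M_n$, so this family is a legal choice and $\delta(P_{X^n Z^n}) \le d(P_{X^n Z^n}, P_{{\cal C}_n})$.

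Next I would insert an intermediate distribution. Let $\tilde P_{S_n Z^n}$ be the joint law $P_{S_n Z^n}$ of $(S_n, Z^n) = (\phi_n(X^n), Z^n)$, and push it back to ${\cal X}^n \times {\cal Z}^n$ through the decoder, i.e. define $Q(x^n, z^n) = \sum_{s : \psi_n(s,z^n) = x^n} P_{S_n Z^n}(s, z^n)$; since the decoder is injective this is just $P_{S_n Z^n}(s, z^n)$ where $x^n = \psi_n(s,z^n)$. By the data-processing property of the variational distance (applying the map $(s,z^n) \mapsto (\psi_n(s,z^n), z^n)$, which is injective here), $d(P_{X^n Z^n}, Q)$ is at most the variational distance between $P_{X^n Z^n}$ routed through encode-then-decode and $Q$ routed through decode; more directly, $d(P_{X^n Z^n}, Q) \le \Pr[\psi_n(\phi_n(X^n), Z^n) \ne X^n] = \varepsilon(\Phi_n)$, because $Q$ is exactly the law of $(\psi_n(\phi_n(X^n),Z^n), Z^n)$ and two distributions differ in variational distance by at most the probability that the underlying couple disagrees. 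Then, separately, I would bound $d(Q, P_{{\cal C}_n})$: both are supported on $\{x^n \in {\cal C}_{z^n}\}$, and $P_{{\cal C}_n}$ replaces the $S_n$-marginal (conditioned on $z^n$) by the uniform one on ${\cal M}_n$, so $d(Q, P_{{\cal C}_n})$ equals exactly $d(P_{S_n Z^n}, P_{U_n} \times P_{Z^n}) = \Delta(\phi_n)$ after identifying ${\cal M}_n$ with ${\cal C}_{z^n}$ via $\psi_n(\cdot, z^n)$.

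Combining, $\delta(P_{X^n Z^n}) \le d(P_{X^n Z^n}, P_{{\cal C}_n}) \le d(P_{X^n Z^n}, Q) + d(Q, P_{{\cal C}_n}) \le \varepsilon(\Phi_n) + \Delta(\phi_n)$, which is the claim. The main obstacle I anticipate is the bookkeeping around the decoder: making the identification between the abstract key alphabet ${\cal M}_n$ and the per-$z^n$ sets ${\cal C}_{z^n}$ clean enough that $d(Q, P_{{\cal C}_n})$ is literally $\Delta(\phi_n)$ and not merely bounded by it, and handling the case where $\psi_n(\cdot, z^n)$ fails to be injective (where one must either argue a WLOG reduction or pad ${\cal C}_{z^n}$ up to size $M_n$ with arbitrary elements and check that this does not increase the distance by more than is already absorbed into $\varepsilon(\Phi_n)$). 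Everything else is a routine application of the triangle inequality and the elementary fact that coupled random variables have variational distance at most their disagreement probability.
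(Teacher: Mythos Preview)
Your proposal is correct and follows essentially the same route as the paper: the paper's Lemma~\ref{lemma:existence-of-injection-code} is exactly your WLOG reduction to a decoder with $\phi_n(\psi_n(s,z^n))=s$ (hence injective in $s$), your pushforward $Q$ is the paper's embedded distribution $\overline{P_{S_nZ^n}}$, your $P_{{\cal C}_n}$ is the paper's $\overline{P_{U_n}\times P_{Z^n}}$, and both proofs finish with the same triangle inequality. The only cosmetic difference is that the paper computes $d(P_{X^nZ^n},\overline{P_{S_nZ^n}})=\varepsilon(\Phi_n)$ exactly via a pointwise comparison (using the stronger condition $\phi_n\circ\psi_n=\mathrm{id}$), whereas you obtain the inequality $\le\varepsilon(\Phi_n)$ via coupling, which suffices.
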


\begin{theorem}
\label{theorem-delta-limit}
If the variance $\sigma^2$ defined in Eq.~(\ref{eq:variance})
is positive, then we have
\begin{eqnarray}
\label{eq:delta-limit}
\lim_{n \to \infty} \delta(P_{X^n Z^n}) = 1.
\end{eqnarray}
\theoremfin
\end{theorem}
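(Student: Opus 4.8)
The plan is to prove the stronger statement that $\delta(P_{X^nZ^n})\ge 1-o(1)$; since $\delta(P_{X^nZ^n})\le 1$ always, Eq.~(\ref{eq:delta-limit}) follows. I would start from the identity $d(P,Q)=1-\sum_a\min\{P(a),Q(a)\}$ (recall $d$ here is half the $\ell_1$-distance). Every $P_{{\cal C}_n}$ has the same ${\cal Z}^n$-marginal $P_{Z^n}$ as $P_{X^nZ^n}$ and vanishes off the sets ${\cal C}_{z^n}$, so
\[
d(P_{X^nZ^n},P_{{\cal C}_n})=1-\sum_{z^n}P_{Z^n}(z^n)\sum_{x^n\in{\cal C}_{z^n}}\min\{P_{X^n|Z^n}(x^n|z^n),\,1/M_n\}.
\]
It therefore suffices to bound the inner double sum by $o(1)$ \emph{uniformly} over all integers $1\le M_n\le|{\cal X}|^n$ and all admissible families ${\cal C}_n$.

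Fix an auxiliary sequence $\gamma_n$ with $\gamma_n\to\infty$ and $\gamma_n/\sqrt n\to 0$, say $\gamma_n=n^{1/4}$, and for each $z^n$ split ${\cal C}_{z^n}$ according to whether $P_{X^n|Z^n}(x^n|z^n)$ lies above, inside, or below the band $[e^{-\gamma_n}/M_n,\,e^{\gamma_n}/M_n]$. The sequences with $P_{X^n|Z^n}(x^n|z^n)>e^{\gamma_n}/M_n$ number fewer than $M_ne^{-\gamma_n}$ (else their conditional probabilities would exceed $1$) and each contributes $1/M_n$, so this part contributes less than $e^{-\gamma_n}$. The sequences in ${\cal C}_{z^n}$ with $P_{X^n|Z^n}(x^n|z^n)<e^{-\gamma_n}/M_n$ number at most $M_n$ and each contributes less than $e^{-\gamma_n}/M_n$, again less than $e^{-\gamma_n}$ in total. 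On the remaining ``band'' part the minimum is at most $P_{X^n|Z^n}(x^n|z^n)$, and the band is exactly $\{x^n:|\log\frac1{P_{X^n|Z^n}(x^n|z^n)}-\log M_n|\le\gamma_n\}$, so that part contributes at most the conditional probability of this event given $z^n$. Averaging over $z^n$ and noting the estimate holds for every ${\cal C}_n$,
\[
\delta(P_{X^nZ^n})\ \ge\ 1-2e^{-\gamma_n}-\sup_{m\in\mathbb{R}}\Pr\!\left[\left|\log\tfrac1{P_{X^n|Z^n}(X^n|Z^n)}-m\right|\le\gamma_n\right].
\]

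The remaining task is the anti-concentration estimate $\sup_m\Pr[\,\cdot\,]\to 0$, and this is exactly where the hypothesis $\sigma^2>0$ enters. Since $\log\frac1{P_{X^n|Z^n}(X^n|Z^n)}=\sum_{i=1}^n\log\frac1{P_{X|Z}(X_i|Z_i)}$ is a sum of i.i.d.\ bounded summands with mean $H(X|Z)$ and finite strictly positive variance $\sigma^2$ (Eq.~(\ref{eq:variance})), the central limit theorem gives convergence of the standardized sum to the standard Gaussian, and since the limiting distribution function $G$ is continuous, the distribution functions $F_n$ of the standardized sum converge to $G$ uniformly. An interval of width $2\gamma_n$ has width $2\gamma_n/(\sigma\sqrt n)\to 0$ after standardization, so by monotonicity of $F_n$ and $G$ its probability is at most $2\sup_t|F_n(t)-G(t)|$ plus a quantity of order $\gamma_n/\sqrt n$ (using $\max_u g(u)<\infty$); both terms vanish and neither depends on $m$. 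Combined with $2e^{-\gamma_n}\to 0$ this yields $\liminf_{n\to\infty}\delta(P_{X^nZ^n})\ge 1$, completing the proof of Theorem~\ref{theorem-delta-limit}.

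I expect the main obstacle to be the uniformity in $M_n$: because $\delta$ is a minimum over all ${\cal C}_n$, a single well-chosen key length $M_n$ must also be ruled out, so no part of the bound may depend on $M_n$. The two counting observations dispose of the ``tail'' sequences with the $M_n$-free bound $e^{-\gamma_n}$, leaving only the ``bulk'' term, which the uniform-in-$m$ anti-concentration of the information density kills; that last ingredient is the genuinely probabilistic content of the theorem and fails precisely when $\sigma^2=0$, which is why the hypothesis is indispensable.
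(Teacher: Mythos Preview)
Your proof is correct and follows essentially the same approach as the paper: both split into three regimes according to whether $P_{X^n|Z^n}(x^n|z^n)$ lies far above, far below, or within a narrow band around $1/M_n$, bound the two tail contributions by $e^{-\gamma_n}$ (in the paper, $e^{-b\sqrt n}$), and kill the band via a CLT-based anti-concentration estimate for the information density. Your use of the identity $d(P,Q)=1-\sum\min\{P,Q\}$ and a single-limit choice $\gamma_n=n^{1/4}$ (versus the paper's Berry--Esseen bound and double limit $n\to\infty$ then $b\to 0$) streamlines the presentation, but the core argument is the same.
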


The combination of Theorems \ref{theorem-trade-off} and
\ref{theorem-delta-limit} states that we cannot 
use the encoders of any (good) Slepian-Wolf codes as
functions for secure privacy amplification
if we employ the variational distance criterion.

\begin{corollary}
\label{corollary:variational-distance}
For arbitrary Slepian-Wolf code $\Phi_n = (\phi_n,\psi_n)$, if
\begin{eqnarray*}
\lim_{n \to \infty} \varepsilon(\Phi_n) = 0,
\end{eqnarray*}
then we have
\begin{eqnarray*}
\lim_{n \to \infty} \Delta(\phi_n) = 1
\end{eqnarray*}
provided that $\sigma^2 > 0$. 
\theoremfin
\end{corollary}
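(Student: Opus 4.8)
The plan is to combine the two theorems that immediately precede the corollary, treating it as a direct consequence. First I would invoke Theorem \ref{theorem-trade-off}, which gives, for the given code $\Phi_n = (\phi_n,\psi_n)$, the inequality
\begin{eqnarray*}
\Delta(\phi_n) \ge \delta(P_{X^n Z^n}) - \varepsilon(\Phi_n).
\end{eqnarray*}
Since $\sigma^2 > 0$ by hypothesis, Theorem \ref{theorem-delta-limit} tells us that $\delta(P_{X^n Z^n}) \to 1$, and the assumption $\varepsilon(\Phi_n) \to 0$ handles the subtracted term. Taking $\liminf$ on both sides therefore yields $\liminf_{n\to\infty}\Delta(\phi_n) \ge 1$.

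Next I would observe the trivial upper bound $\Delta(\phi_n) \le 1$, which holds because $d(\cdot,\cdot)$ is the variational distance divided by two and hence $d(P,Q) = \frac12 \sum_a |P(a)-Q(a)| \le \frac12(\sum_a P(a) + \sum_a Q(a)) = 1$ for any two probability distributions; here $P = P_{S_n Z^n}$ and $Q = P_{U_n}\times P_{Z^n}$. Combining $\liminf_{n\to\infty}\Delta(\phi_n)\ge 1$ with $\limsup_{n\to\infty}\Delta(\phi_n)\le 1$ forces $\lim_{n\to\infty}\Delta(\phi_n)=1$, which is the claim.

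There is essentially no obstacle here: all the work has already been done in Theorems \ref{theorem-trade-off} and \ref{theorem-delta-limit}, and the corollary is just the bookkeeping step of chaining the triangle-type bound with the limit of $\delta(P_{X^n Z^n})$ and the $[0,1]$ range of $\Delta$. The only point worth stating explicitly is that $M_n$ in the definition of $\Delta(\phi_n)$ coincides with the code's output alphabet size, so that $P_{{\cal C}_n}$-type distributions appearing in $\delta(P_{X^n Z^n})$ indeed include the candidate $P_{U_n}\times P_{Z^n}$ induced by $\phi_n$ — but this is exactly what Theorem \ref{theorem-trade-off} already encodes, so nothing new is needed.
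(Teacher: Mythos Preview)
Your proposal is correct and matches the paper's own treatment: the paper presents this corollary with an immediate \qed, regarding it as the direct combination of Theorem~\ref{theorem-trade-off} (giving $\varepsilon(\Phi_n)+\Delta(\phi_n)\ge\delta(P_{X^nZ^n})$) and Theorem~\ref{theorem-delta-limit} (giving $\delta(P_{X^nZ^n})\to 1$ when $\sigma^2>0$), together with $\Delta(\phi_n)\le 1$. Your added remark about the upper bound and the sandwich argument is exactly the bookkeeping the paper leaves implicit.
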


From Eq.~(\ref{eq:distinguishability}), 
Corollary \ref{corollary:variational-distance} means that
the actual distribution $P_{S_n Z^n}$ and the
ideal distribution $P_{U_n} \times P_{Z^n}$ are
completely distinguishable asymptotically.

The combination of Pinsker's inequality and 
Corollary \ref{corollary:variational-distance} implies
the following corollary, which states
that the keys obtained by
the encoders of any (good) Slepian-Wolf codes 
do not satisfy the divergence criterion,
although we have shown stronger result (Theorem \ref{theorem-divergence-bound})
in Section \ref{main-results-2}.

\begin{corollary}
\label{corollary-of-main-theorem}
For any
sequence of Slepian-Wolf codes 
$\{ \Phi_n = (\phi_n, \psi_n) \}$ such that
$\lim_{n \to \infty} \varepsilon(\Phi_n) = 0$, we have
\begin{eqnarray*}
\liminf_{n \to \infty} D(\phi_n) \ge \frac{2}{\ln 2}
\end{eqnarray*}
provided that $\sigma^2 > 0$. 
\theoremfin
\end{corollary}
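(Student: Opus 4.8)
The plan is to obtain this corollary as an immediate consequence of Corollary~\ref{corollary:variational-distance} together with Pinsker's inequality, with no further appeal to the Slepian-Wolf structure. The key observation is that $D(\phi_n)$ and $\Delta(\phi_n)$ are built from the very same pair of distributions, the actual distribution $P_{S_n Z^n}$ and the ideal distribution $P_{U_n}\times P_{Z^n}$, so Pinsker's inequality~\cite{cover} relates them directly and yields a bound of the shape $D(\phi_n) \ge c\,\Delta(\phi_n)^2$ for an explicit constant $c$.

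Concretely, I would proceed in three short steps. First I would fix the constant: since $d(\cdot,\cdot)$ in this paper denotes the variational distance divided by two, the $L^1$ distance between $P_{S_n Z^n}$ and $P_{U_n}\times P_{Z^n}$ is $2\Delta(\phi_n)$, so Pinsker's inequality in the form $D(P\|Q)\ge \frac{1}{2\ln 2}\|P-Q\|_1^2$ gives $D(\phi_n)\ge \frac{1}{2\ln 2}\bigl(2\Delta(\phi_n)\bigr)^2 = \frac{2}{\ln 2}\,\Delta(\phi_n)^2$. Second, under the hypotheses $\sigma^2>0$ and $\lim_{n\to\infty}\varepsilon(\Phi_n)=0$, I would invoke Corollary~\ref{corollary:variational-distance} to conclude $\lim_{n\to\infty}\Delta(\phi_n)=1$. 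Third, I would take $\liminf_{n\to\infty}$ in the inequality of the first step and substitute $\Delta(\phi_n)\to 1$, which gives $\liminf_{n\to\infty}D(\phi_n)\ge \frac{2}{\ln 2}$.

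I do not expect a genuine obstacle: every substantive ingredient already lives in Theorems~\ref{theorem-trade-off} and~\ref{theorem-delta-limit} (equivalently, in Corollary~\ref{corollary:variational-distance}), and the present statement is merely their repackaging through Pinsker's inequality. The only point that warrants care is the bookkeeping of normalizations --- the factor $2$ that enters because $d$ is half the $L^1$ distance --- so that the stated constant $\frac{2}{\ln 2}$ is reproduced exactly rather than off by a multiplicative constant. As emphasized in the Introduction, this corollary is intentionally much weaker than the $\sqrt{n}$ lower bound of Section~\ref{main-results-2}; its sole purpose is to record, by a one-line argument, that the divergence criterion also fails.
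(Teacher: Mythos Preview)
Your approach is exactly the paper's: the sentence introducing the corollary says explicitly that it follows from Pinsker's inequality combined with Corollary~\ref{corollary:variational-distance}, and your three steps just spell this out. The only wrinkle is the logarithm base---the paper declares base~$e$ throughout, under which Pinsker reads $D\ge \tfrac{1}{2}\|P-Q\|_1^2=2\Delta^2$ and would yield the bound $2$ rather than $2/\ln 2$; the constant you (and the paper) state matches the base-$2$ form from~\cite{cover}, so this is an inconsistency in the paper's own conventions rather than a flaw in your derivation.
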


The corollary only suggests that
Eve might know a few bits about the long key, 
which is not a serious problem in practice.
On the other hand, the stronger result  
suggests that Eve's knowledge about the key grows infinitely
as the length of the key goes to infinity, which is a serious
problem in practice. 

Theorems \ref{theorem-trade-off} and
\ref{theorem-delta-limit} can be regarded as
a generalization of \cite[Theorem 4]{hayashi:08}
for the Slepian-Wolf code.
Therefore, Theorems \ref{theorem-trade-off} and
\ref{theorem-delta-limit} can also interpreted as that
the Slepian-Wolf version
of the folklore theorem does not hold for the
variational distance criterion.

\begin{remark}
For a distribution $P_{XZ}$ with
$\sigma = 0$,
we can easily show that 
$\delta(P_{X^n Z^n}) = 0$ for any $n$ by
taking ${\cal C}_{z^n}$ as the support of
$P_{X^n|Z^n}(x^n|z^n)$. 
\theoremfin
\end{remark}

\begin{remark}
\label{remark-trade-off}
It should be noted that Theorem \ref{theorem-trade-off}
holds not only for i.i.d.~random variables $(X^n,Z^n)$,
but also for any $(X^n, Z^n)$.
\theoremfin
\end{remark}

\subsection{Proof of Theorem \ref{theorem-trade-off}}
\label{subsec:proof-1}

Before we show a proof of Theorem \ref{theorem-trade-off},
we introduce the following lemma.

\begin{lemma}
\label{lemma:existence-of-injection-code}
For arbitrary code $\Phi_n = (\phi_n, \psi_n)$,
there exists a code $\Phi_n^\prime = (\phi_n, \psi_n^\prime)$
that satisfies
\begin{eqnarray}
\label{eq:error-prob-decrease}
\varepsilon(\Phi_n^\prime) \le \varepsilon(\Phi_n)
\end{eqnarray}
and
\begin{eqnarray}
\label{eq:injection-condition}
\phi_n(\psi_n^\prime(s, z^n)) = s~~~
\forall (s,z^n) \in {\cal M}_n \times {\cal Z}^n.
\end{eqnarray}
\theoremfin
\end{lemma}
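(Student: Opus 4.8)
The plan is to prove Lemma \ref{lemma:existence-of-injection-code} by a straightforward ``re-routing'' of the decoder. The encoder $\phi_n$ is kept fixed; we only modify the decoder. The key observation is that the error event of the original code, $\{(x^n,z^n) \mymid \psi_n(\phi_n(x^n),z^n) \neq x^n\}$, is a set of pairs for which the decoder ``misses'' $x^n$, and we can only do better (or equal) if, for each $(s,z^n)$, we let $\psi_n^\prime(s,z^n)$ be some element of the fiber $\phi_n^{-1}(s)$ — any choice at all from that fiber — rather than an arbitrary element of ${\cal X}^n$. That immediately forces Eq.~(\ref{eq:injection-condition}), since $\psi_n^\prime(s,z^n) \in \phi_n^{-1}(s)$ means $\phi_n(\psi_n^\prime(s,z^n)) = s$.

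First I would define, for each $(s,z^n) \in {\cal M}_n \times {\cal Z}^n$ with $\phi_n^{-1}(s) \neq \emptyset$,
\begin{eqnarray*}
\psi_n^\prime(s,z^n) = \left\{
\begin{array}{ll}
\psi_n(s,z^n) & \mbox{if } \phi_n(\psi_n(s,z^n)) = s, \\
x_0(s) & \mbox{otherwise,}
\end{array}
\right.
\end{eqnarray*}
where $x_0(s)$ is an arbitrary fixed element of $\phi_n^{-1}(s)$; for $s$ with empty fiber the definition of $\psi_n^\prime(s,z^n)$ is irrelevant (such $s$ have $P_{S_n}(s)=0$), so set it to anything consistent. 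By construction $\phi_n(\psi_n^\prime(s,z^n)) = s$ always, so Eq.~(\ref{eq:injection-condition}) holds.

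Next I would verify Eq.~(\ref{eq:error-prob-decrease}) by a pointwise comparison of the error indicators. Fix $(x^n,z^n)$ and write $s = \phi_n(x^n)$. I claim that if $\psi_n^\prime(s,z^n) \neq x^n$ then also $\psi_n(s,z^n) \neq x^n$, i.e.\ the new error event is contained in the old one. Indeed, suppose $\psi_n^\prime(s,z^n) \neq x^n$. If we are in the first case of the definition, then $\psi_n^\prime(s,z^n) = \psi_n(s,z^n)$, so $\psi_n(s,z^n) \neq x^n$ directly. If we are in the second case, then $\phi_n(\psi_n(s,z^n)) \neq s = \phi_n(x^n)$, which forces $\psi_n(s,z^n) \neq x^n$. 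Hence the set of $(x^n,z^n)$ on which $\Phi_n^\prime$ errs is a subset of the set on which $\Phi_n$ errs, and monotonicity of $P_{X^nZ^n}$ gives $\varepsilon(\Phi_n^\prime) \le \varepsilon(\Phi_n)$.

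This argument is essentially routine; there is no serious obstacle. The only point requiring a little care is the bookkeeping for indices $s$ with empty preimage under $\phi_n$ (and, relatedly, making sure the ``arbitrary fixed element'' choices can be made — which they can, since each nonempty fiber is a finite set). I would simply note in passing that such $s$ carry zero probability and do not affect $\varepsilon$, so any fixed choice there is harmless, and then the two displayed conditions follow as above.
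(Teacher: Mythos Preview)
Your proposal is correct and follows essentially the same approach as the paper: the paper defines $\psi_n^\prime(s,z^n)$ to be $\psi_n(s,z^n)$ when $\phi_n(\psi_n(s,z^n))=s$ and an arbitrary element of $\phi_n^{-1}(s)$ otherwise, then simply declares both conditions ``obvious.'' Your version is in fact more explicit, supplying the pointwise containment argument for Eq.~(\ref{eq:error-prob-decrease}) and the remark about empty fibers that the paper omits.
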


\begin{proof}
We construct a decoder $\psi_n^\prime$ as follows.
If $\phi_n(\psi_n(s,z^n)) \neq s$, then we set
$\psi_n^\prime(s, z^n) = \tilde{x}^n$
for arbitrarily chosen 
$\tilde{x}^n \in \phi_n^{-1}(s)$.
Otherwise, we set $\psi_n^\prime(s,z^n) = \psi_n(s,z^n)$.
From the construction of this decoder, it is obvious that
the code $\Phi_n^\prime = (\phi_n,\psi_n^\prime)$ satisfies
Eqs.~(\ref{eq:error-prob-decrease}) 
and (\ref{eq:injection-condition}).
\end{proof}

\noindent{\em Proof of Theorem \ref{theorem-trade-off} }

From Lemma \ref{lemma:existence-of-injection-code},
it suffices to prove Theorem \ref{theorem-trade-off}
for codes satisfying Eq.~(\ref{eq:injection-condition}).
Therefore, we assume that a code $\Phi_n$ satisfies
Eq.~(\ref{eq:injection-condition}) in
the rest of this section.

By using the decoder $\psi_n$, we construct the map
\begin{eqnarray}
\overline{\psi}_n(s,z^n) = (\psi_n(s,z^n), z^n).
\end{eqnarray}
Since the decoder satisfies the condition 
in Eq.~(\ref{eq:injection-condition}), 
$\overline{\psi}_n$ is an injection map from
${\cal M}_n \times {\cal Z}^n$ into
${\cal X}^n \times {\cal Z}^n$.

For the extended code $\overline{\Phi}_n = (\phi_n,\overline{\psi}_n)$,
we define the error probability
\begin{eqnarray*}
\lefteqn{ \hspace{-3mm} \varepsilon(\overline{\Phi}_n) } \\
 && \hspace{-5mm} =
 P_{X^n Z^n}(\{ (x^n,z^n) \mymid \overline{\psi}_n(\phi_n(x^n),z^n) \neq
  (x^n,z^n) \}).
\end{eqnarray*}
Obviously, we have $\varepsilon(\overline{\Phi}_n) = \varepsilon(\Phi_n)$.

Next, we define the distribution 
$\overline{P_{U_n} \times P_{Z^n}}$, which is
the embedding of $P_{U_n} \times P_{Z^n}$
into ${\cal X}^n \times {\cal Z}^n$, as follows:
\begin{eqnarray*}
\overline{P_{U_n} \times P_{Z^n}}(x^n,z^n) =
 P_{U_n} \times P_{Z^n}(\overline{\psi}_n^{-1}(x^n,z^n))
\end{eqnarray*}
for $(x^n, z^n) \in \overline{\psi}_n({\cal M}_n \times {\cal Z}^n)$,
and $\overline{P_{U_n} \times P_{Z^n}}(x^n,z^n) = 0$ for 
other $(x^n,z^n)$. Similarly, we define
the distribution $\overline{P_{S_n Z^n}}$, which is
the embedding of $P_{S_n Z^n}$ into ${\cal X}^n \times {\cal Z}^n$.

Since the decoder $\psi_n$ satisfies Eq.~(\ref{eq:injection-condition}),
we have
\textchange{
\begin{eqnarray}
P_{X^n Z^n}(x^n,z^n) 
  &\le& \sum_{\tilde{x}^n \in \phi_n^{-1}(\phi_n(x^n))} 
 P_{X^n Z^n}(\tilde{x}^n,z^n) \nonumber \\
 &=& \overline{P_{S_nZ^n}}(x^n,z^n)
\label{eq:proof-1}
\end{eqnarray}
}
for $(x^n,z^n) \in \overline{\psi}_n({\cal M}_n \times {\cal Z}^n)$,
where the equality in Eq.~(\ref{eq:proof-1}) follows
from Eq.~(\ref{eq:definition-of-psz})
and the definition of $\overline{P_{S_n Z^n}}$. 
On the other hand, we have
\begin{eqnarray}
\label{eq:proof-2}
P_{X^n Z^n}(x^n,z^n) \ge \overline{P_{S_n Z^n}}(x^n,z^n) = 0
\end{eqnarray}
for $(x^n,z^n) \in ({\cal X}^n \times {\cal Z}^n) \backslash \overline{\psi}_n({\cal M}_n \times {\cal Z}^n)$. \textchange{Noting Eqs.~(\ref{eq:proof-1})
and (\ref{eq:proof-2}) and using 
Eq.~(\ref{eq:second-def-of-variational-distance}),} we have
\begin{eqnarray}
\lefteqn{ d(P_{X^n Z^n}, \overline{P_{S_n Z^n}}) } \nonumber \\
&=& 
P_{X^n Z^n}(({\cal X}^n \times {\cal Z}^n) \backslash \overline{\psi}_n({\cal M}_n \times {\cal Z}^n)).
\label{eq:proof-3}
\end{eqnarray}
By using Eq.~(\ref{eq:proof-3}), we can rewrite 
$\varepsilon(\overline{\Phi}_n)$ as
\begin{eqnarray*}
\varepsilon(\overline{\Phi}_n) &=&
  P_{X^n Z^n}(({\cal X}^n \times {\cal Z}^n) \backslash \overline{\psi}_n({\cal M}_n \times {\cal Z}^n)) \\
 &=& d(P_{X^n Z^n}, \overline{P_{S_n Z^n}}).
\end{eqnarray*}

Finally, from the definition of $\delta(P_{X^n Z^n})$ and 
the triangular inequality, we have
\begin{eqnarray*}
\delta(P_{X^n Z^n}) 
 &\le& d(P_{X^n Z^n}, \overline{P_{U_n} \times P_{Z^n}}) \\
 &\le& d(P_{X^n Z^n}, \overline{P_{S_n Z^n}}) \\
  && + d(\overline{P_{S_n Z^n}}, \overline{P_{U_n} \times P_{Z^n}}) \\
 &=& \varepsilon(\Phi_n) + \Delta(\phi_n),
\end{eqnarray*}
which completes the proof of
Theorem \ref{theorem-trade-off}. \qed 

\subsection{Proof of Theorem \ref{theorem-delta-limit}}
\label{subsec:proof-2}

Let $\{ {\cal C}_n \}$ be the sequence of
the families such that 
$d(P_{X^n Z^n},P_{{\cal C}_n}) = \delta(P_{X^n Z^n})$
for each $n$.
For arbitrary positive constant $b > 0$, 
we divide ${\cal X}^n \times {\cal Z}^n$ into
the following three subsets:
\begin{eqnarray*}
{\cal A}_+ &=& \{ (x^n, z^n) \mymid M_n^{-1} e^{b \sqrt{n}} 
   < P_{X^n|Z^n}(x^n|z^n) \}, \\
{\cal A}_- &=& \{ (x^n, z^n) \mymid P_{X^n|Z^n}(x^n|z^n) \le
    M_n^{-1} e^{- b \sqrt{n}} \},
\end{eqnarray*}
and ${\cal A}_0 = ({\cal X}^n \times {\cal Z}^n) \backslash ({\cal A}_+ \cup {\cal A}_-)$.
Let
\begin{eqnarray*}
\overline{{\cal C}}_n = \bigcup_{z^n \in {\cal Z}^n} 
  \{ (x^n, z^n) \mymid x^n \in {\cal C}_{z^n} \},
\end{eqnarray*}
which is the support of $P_{{\cal C}_n}$.

We bound $\delta(P_{X^n Z^n})$ as follows:
\textchange{
\begin{eqnarray*}
\lefteqn{ \delta(P_{X^n Z^n}) } \\
&=& \frac{1}{2} \left[ \sum_{(x^n,z^n) \in {\cal A}_+} 
   | P_{X^n Z^n}(x^n,z^n) - P_{{\cal C}_n}(x^n,z^n)| \right. \\
&& \hspace{-3mm}
   \left. + \sum_{(x^n,z^n) \in {\cal A}_- \cap \overline{{\cal C}}_n} 
   | P_{X^n Z^n}(x^n,z^n) - P_{{\cal C}_n}(x^n,z^n)| \right. \\
&& \hspace{-3mm}
   \left. + \sum_{(x^n,z^n) \in {\cal A}_- \backslash \overline{{\cal C}}_n} 
   | P_{X^n Z^n}(x^n,z^n) - P_{{\cal C}_n}(x^n,z^n)| \right. \\
&& \hspace{-3mm}
   \left. + \sum_{(x^n,z^n) \in {\cal A}_0} 
   | P_{X^n Z^n}(x^n,z^n) - P_{{\cal C}_n}(x^n,z^n)| \right] \\
&\ge& \frac{1}{2}[ (P_{X^n Z^n}({\cal A}_+) - P_{{\cal C}_n}({\cal A}_+) ) \\
&& + (P_{{\cal C}_n}({\cal A}_-) 
      - P_{X^n Z^n}({\cal A}_- \cap \overline{{\cal C}}_n )) \\
&&      + P_{X^n Z^n}({\cal A}_- \backslash \overline{{\cal C}}_n)  \\
&& + (P_{{\cal C}_n}({\cal A}_0) - P_{X^n Z^n}({\cal A}_0) )] \\
&=& 1 - ( P_{{\cal C}_n}({\cal A}_+) 
   + P_{X^n Z^n}({\cal A}_- \cap \overline{{\cal C}}_n) \\
&&   + P_{X^n Z^n}({\cal A}_0) ),
\end{eqnarray*}
where, at the inequality, we used the relation
\begin{eqnarray*}
\lefteqn{ \sum_{a \in {\cal B}} |P(a) - Q(a)| } \\
&\ge& \max[P({\cal B}) - Q({\cal B}), Q({\cal B}) - P({\cal B})]
\end{eqnarray*}
for any distributions $P$ and $Q$ on ${\cal A} \supset {\cal B}$,
and we also used the facts
$P_{{\cal C}_n}({\cal A}_-\cap \overline{{\cal C}}_n) = P_{{\cal C}_n}({\cal A}_-)$
and 
$P_{{\cal C}_n}({\cal A}_-\backslash \overline{{\cal C}}_n) = 0$.
}

We use the following inequalities
\begin{eqnarray}
\label{eq:bound-1}
P_{{\cal C}_n}({\cal A}_+) &\le& e^{- b \sqrt{n}}, \\
P_{X^n Z^n}({\cal A}_- \cap \overline{{\cal C}}_n) 
  &\le& e^{- b \sqrt{n}},
\label{eq:bound-2}
\end{eqnarray}
and 
\begin{eqnarray}
\label{eq:bound-3}
P_{X^n Z^n}({\cal A}_0) \le \frac{2b}{\sqrt{2 \pi \sigma}} 
   + \frac{2 C_1}{\sqrt{n}} \left(\frac{\rho}{\sigma} \right)^3,
\end{eqnarray}
where $C_1$ is a constant that does not depend on $n$ and
$\rho$ is the third moment of 
$- \log P_{X|Z}(X|Z)$. We will prove these inequalities
in Appendices \ref{appendix-1}, 
\ref{appendix-2}, and \ref{appendix-3} respectively.

From Eqs.~(\ref{eq:bound-1})--(\ref{eq:bound-3})
and the fact that $b > 0$ is arbitrary, we have
\begin{eqnarray*}
\lefteqn{ \liminf_{n \to \infty} \delta(P_{X^n Z^n}) } \\
 &\ge& 1 - \limsup_{b \to 0} \limsup_{n \to \infty} [
  P_{{\cal C}_n}({\cal A}_+)  + \\
&& P_{X^n Z^n}({\cal A}_- \cap \overline{{\cal C}}_n) 
   + P_{X^n Z^n}({\cal A}_0) ] \\
&=& 1.
\end{eqnarray*}
Since the variational distance (divided by $2$) is smaller than
$1$, we have the statement of theorem. \qed




\section{Conclusion}
\label{sec:conclusion}

In this paper, we showed that the encoders of
(good) Slepian-Wolf codes cannot be used as 
functions for secure privacy amplification
in the sense of the variational distance criterion
nor the divergence criterion. 
The consequence of our results is that
we must use privacy amplification
not based on the Slepian-Wolf code
(e.g.~\cite{bennett:95,maurer:00,csiszar:04,renner:05b})
if we want to employ the criteria other 
than the weak one (the normalized divergence criterion).

\section*{Acknowledgment}
The authors would like to thank Dr.~Jun Muramatsu
for comments. The first author also would like to thank
Prof.~Yasutada Oohama for his support.
We also thank the anonymous reviewers for their
constructive comments and suggestions.
This research is partly supported by the Japan
Society of Promotion of Science under 
Grants-in-Aid No.~00197137 and
by Grant-in-Aid for Young Scientists (Start-up):
KAKENHI 21860064.

\appendix
\section{Proof of Theorem \ref{proposition:sw-second-order-converse}}
\label{appendix-0}

In order to show a proof of 
Theorem \ref{proposition:sw-second-order-converse},
we need the following lemma \cite{miyake:95} (see also \cite{han:book}).

\begin{lemma}
\label{lemma-sw-converse}
For any Slepian-Wolf code $\Phi_n = (\phi_n, \psi_n)$, we have
\begin{eqnarray*}
\lefteqn{ \varepsilon(\Phi_n)} \\
& \ge&  P_{X^nZ^n}\left(\left\{
 (x^n,z^n) \mymid \phantom{\log \frac{1}{P_{X^n|Z^n}(x^n|z^n)}} \right. \right. \\
 && \left. \left. \log \frac{1}{P_{X^n|Z^n}(x^n|z^n)} \ge \alpha_n
  \right\} \right) 
  - M_n e^{- \alpha_n}, 
\end{eqnarray*}
where $\alpha_n$ is arbitrary real number. 
\theoremfin
\end{lemma}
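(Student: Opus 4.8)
The plan is a standard information-spectrum argument that isolates the ``too unlikely to be compressed'' source pairs. Put
\begin{eqnarray*}
{\cal S}_n = \left\{ (x^n,z^n) \mymid \log \frac{1}{P_{X^n|Z^n}(x^n|z^n)} \ge \alpha_n \right\},
\end{eqnarray*}
and let ${\cal D}_n = \{ (x^n,z^n) \mymid \psi_n(\phi_n(x^n),z^n) = x^n \}$ be the set of correctly decoded pairs, so that $\varepsilon(\Phi_n) = 1 - P_{X^n Z^n}({\cal D}_n)$. The combinatorial fact driving the proof is that for each fixed $z^n$ the section $\{ x^n \mymid (x^n,z^n) \in {\cal D}_n \}$ lies in the image of the map $s \mapsto \psi_n(s,z^n)$ and therefore contains at most $M_n$ elements, no matter how the encoder $\phi_n$ partitions ${\cal X}^n$.

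First I would bound $P_{X^n Z^n}({\cal D}_n) \le P_{X^n Z^n}({\cal D}_n \cap {\cal S}_n) + P_{X^n Z^n}({\cal S}_n^c)$. For the first term, expand it as a sum over $z^n$ of $P_{Z^n}(z^n)$ times $\sum_{x^n} P_{X^n|Z^n}(x^n|z^n)$ taken over those $x^n$ with $(x^n,z^n) \in {\cal D}_n \cap {\cal S}_n$; on ${\cal S}_n$ each conditional probability is at most $e^{-\alpha_n}$, and by the cardinality bound the inner sum ranges over at most $M_n$ indices, so $P_{X^n Z^n}({\cal D}_n \cap {\cal S}_n) \le M_n e^{-\alpha_n}$. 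Since $P_{X^n Z^n}({\cal S}_n^c) = 1 - P_{X^n Z^n}({\cal S}_n)$, combining gives $1 - \varepsilon(\Phi_n) \le M_n e^{-\alpha_n} + 1 - P_{X^n Z^n}({\cal S}_n)$, and rearranging yields the asserted inequality.

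I do not expect a genuine obstacle here: the argument is elementary once the decodable section is seen to have size at most $M_n$. The only point meriting care is precisely that cardinality bound --- it holds because, for fixed side information $z^n$, the decoder produces at most $M_n$ distinct reconstructions --- together with the routine bookkeeping that the split over ${\cal S}_n$ versus ${\cal S}_n^c$ costs nothing since we only ever discard probability mass.
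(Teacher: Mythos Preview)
Your argument is correct and is precisely the standard information-spectrum converse: the decodable section for each $z^n$ has at most $M_n$ elements, so the mass of ${\cal D}_n \cap {\cal S}_n$ is bounded by $M_n e^{-\alpha_n}$, and the split over ${\cal S}_n$ gives the claim. The paper itself does not supply a proof of this lemma; it is quoted from \cite{miyake:95} (see also \cite{han:book}) with a bare \qed, and the proof found there is exactly the one you wrote.
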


For arbitrarily fixed $\gamma > 0$,
suppose that there exists a code sequence 
$\{ \Phi_n \}$ that satisfies 
Eq.~(\ref{eq:second-order-error-prob}) and
\begin{eqnarray*}
\liminf_{n \to \infty} \frac{1}{\sqrt{n}} \log \frac{M_n}{e^{nH(X|Z)}}
 \le b - 2 \gamma.
\end{eqnarray*}
Then, there exists a increasing sequence 
$\{ n_i \}_{i=1}^\infty$ such that
\begin{eqnarray*}
\frac{1}{\sqrt{n_i}} \log \frac{M_{n_i}}{e^{n_i H(X|Z)}} \le b - \gamma 
\end{eqnarray*}
for every $i$.

By using Lemma \ref{lemma-sw-converse}
for $\alpha_{n_i} = \sqrt{n_i} b + n_i H(X|Z)$,
we have
\begin{eqnarray*}
\lefteqn{ \varepsilon(\Phi_{n_i})} \\
& \ge& P_{X^{n_i}Z^{n_i}}\left(\left\{
 (x^{n_i},z^{n_i}) \mymid  \phantom{\frac{1}{\sigma}} \right. \right.\\
&& \left. \left. \frac{1}{\sigma \sqrt{n_i}} \left( 
  \log \frac{1}{P_{X^{n_i}|Z^{n_i}}(x^{n_i}|z^{n_i})} - n_i H(X|Z) \right)  \right. \right. \\
&& \left. \left. \ge \frac{b}{\sigma}
  \right\} \right) 
   - e^{- \gamma \sqrt{n_i}}
\end{eqnarray*}
for every $i$.
By using the central limit theorem, we have
\begin{eqnarray*}
\limsup_{n \to \infty} \varepsilon(\Phi_n) 
&\ge& \limsup_{i \to \infty} \varepsilon(\Phi_{n_i}) \\
&\ge& 
  1 - G\left(\frac{b}{\sigma}\right),
\end{eqnarray*}
which contradicts Eq.~(\ref{eq:second-order-error-prob}).
Therefore, if the code sequence $\{ \Phi_n \}$
satisfies Eq.~(\ref{eq:second-order-error-prob}),
then it satisfies
\begin{eqnarray*}
\liminf_{n \to \infty} \frac{1}{\sqrt{n}} \log \frac{M_n}{e^{nH(X|Z)}}
 > b - 2 \gamma.
\end{eqnarray*}
Since $\gamma >0$ is arbitrary, we 
have the assertion of the theorem. \qed

\section{}
\subsection{Proof of Eq.~(\ref{eq:bound-1})}
\label{appendix-1}

From the definitions of $P_{{\cal C}_n}$ and ${\cal A}_+$, we have
\begin{eqnarray*}
P_{{\cal C}_n}({\cal A}_+)
 &\le& \sum_{(x^n,z^n) \in {\cal A}_+} \frac{1}{M_n} P_{Z^n}(z^n) \\
 &\le& \sum_{(x^n,z^n) \in {\cal A}_+} P_{X^n Z^n}(x^n,z^n) e^{- b \sqrt{n}} \\
  &\le& e^{- b \sqrt{n}}.
\end{eqnarray*}
\subsection{Proof of Eq.~(\ref{eq:bound-2})}
\label{appendix-2}

From the definitions of $P_{{\cal C}_n}$ and ${\cal A}_-$, we have
\begin{eqnarray*}
\lefteqn{ P_{X^n Z^n}({\cal A}_- \cap \overline{{\cal C}}_n) } \\
 &=& \sum_{(x^n,z^n) \in {\cal A}_- \cap \overline{{\cal C}}_n} 
  P_{X^n Z^n}(x^n,z^n) \\
 &\le& \sum_{(x^n,z^n) \in {\cal A}_- \cap \overline{{\cal C}}_n}
  \frac{1}{M_n} P_{Z^n}(z^n) e^{- b \sqrt{n}} \\
 &=& \sum_{(x^n,z^n) \in {\cal A}_- \cap \overline{{\cal C}}_n}
  P_{{\cal C}_n}(x^n,z^n) e^{- b \sqrt{n}} \\
 &\le& e^{- b \sqrt{n}}.
\end{eqnarray*}

\subsection{Proof of Eq.~(\ref{eq:bound-3})}
\label{appendix-3}

To simplify the notation, we introduce the random variable
\begin{eqnarray*}
W_n = \sum_{i = 1}^n \log \frac{1}{P_{X|Z}(X_i|Z_i)}
\end{eqnarray*}
for 
$(X^n,Z^n) = ((X_1,Z_1),\ldots, (X_n, Z_n))$.
Then, we can rewrite
the left hand side of Eq.~(\ref{eq:bound-3}) as
\begin{eqnarray*}
\lefteqn{P_{X^n Z^n}({\cal A}_0) } \\
 &=& \Pr\{ \log M_n - b \sqrt{n} \le W_n < \log M_n + b \sqrt{n} \} \\
 &=& \Pr\left\{ \frac{\log M_n - n H(X|Z)}{\sigma \sqrt{n}} 
   - \frac{b}{\sigma} \right. \\
 &&  \le \frac{W_n - n H(X|Z)}{\sigma \sqrt{n}} \\
 && \left. < 
 \frac{ \log M_n - n H(X|Z)}{ \sigma \sqrt{n}} + \frac{b}{\sigma} \right\} \\
 &=& \Pr\left\{ \frac{W_n - n H(X|Z)}{\sigma \sqrt{n}} \right. \\
 && \left. < 
 \frac{ \log M_n - n H(X|Z)}{ \sigma \sqrt{n}} + \frac{b}{\sigma} \right\} \\
 &-&  \Pr\left\{ \frac{W_n - n H(X|Z)}{\sigma \sqrt{n}} \right. \\
 && \left.  < 
 \frac{ \log M_n - n H(X|Z)}{ \sigma \sqrt{n}} - \frac{b}{\sigma} \right\}.
\end{eqnarray*}

By using the central limit theorem \cite[Corollary 6]{rao:book},
we have
\begin{eqnarray*}
\lefteqn{ \hspace{-3mm} \Pr\left\{ \frac{W_n - n H(X|Z)}{\sigma \sqrt{n}} < 
 \frac{ \log M_n - n H(X|Z)}{ \sigma \sqrt{n}} + \frac{b}{\sigma} \right\} } \\
&\le& 
  \int_{- \infty}^{\frac{\log M_n - n H(X|Z)}{\sigma \sqrt{n}} + \frac{b}{\sigma}} g(u) du + \frac{C_1}{\sqrt{n}}
  \left( \frac{\rho}{\sigma} \right)^3
\end{eqnarray*}
and 
\begin{eqnarray*}
\lefteqn{ \hspace{-3mm} \Pr\left\{ \frac{W_n - n H(X|Z)}{\sigma \sqrt{n}} < 
 \frac{ \log M_n - n H(X|Z)}{ \sigma \sqrt{n}} \textchange{-} \frac{b}{\sigma} \right\} } \\
&\ge& 
  \int_{- \infty}^{\frac{\log M_n - n H(X|Z)}{\sigma \sqrt{n}} - \frac{b}{\sigma}} g(u) du - \frac{C_1}{\sqrt{n}}
  \left( \frac{\rho}{\sigma} \right)^3.
\end{eqnarray*}
Hence, we have
\begin{eqnarray*}
\hspace{-7mm} P_{X^n Z^n}({\cal A}_0) \le 
 \int_{\frac{\log M_n - n H(X|Z)}{\sigma \sqrt{n}} - \frac{b}{\sigma}}^{\frac{\log M_n - n H(X|Z)}{\sigma \sqrt{n}} + \frac{b}{\sigma}} 
 g(u) du + \frac{2 C_1}{\sqrt{n}} 
 \left(\frac{\rho}{\sigma}\right)^3.
\end{eqnarray*}
Since the interval of the integral is $\frac{2 b}{\sigma}$ and
the height of $g(u)$ is lower than $1$, we have
Eq.~(\ref{eq:bound-3}).

\subsection{Proofs of Eqs.~(\ref{eq:normalized-lower-bound}) 
and (\ref{eq:normalized-upper-bound})}
\label{appendix-proof-of-normalized}

\textchange{
Eq.~(\ref{eq:normalized-lower-bound}) is derived by the inequality
\begin{eqnarray*}
D(f_n) &=& \log M_n - H(S_n|Z^n) \\
 &\ge& \log M_n - H(X^n|Z^n) \\
 &=& \log M_n - nH(X|Z).
\end{eqnarray*}
On the other hand, Eq.~(\ref{eq:normalized-upper-bound})
is derived by the inequality
\begin{eqnarray}
\hspace{-5mm} D(\phi_n) &=& \log M_n - H(X^n|Z^n) \nonumber \\
  && \hspace{-3mm} + [ H(X^n|Z^n) - H(S_n|Z^n)] \nonumber \\
\hspace{-5mm} &=&  \log M_n - n H(X|Z) + H(X^n|S_n,Z^n) \label{eq:normalized-div-bound-1} \\
\hspace{-5mm} &\le& \log M_n - nH(X|Z) \nonumber \\
 && \hspace{-5mm} + n \varepsilon(\Phi_n) \log |{\cal X}|
 + h(\varepsilon(\Phi_n)),
\label{eq:normalized-div-bound-2}
\end{eqnarray}
where we used the fact that $S_n$ is a function of $X^n$
in Eq.~(\ref{eq:normalized-div-bound-1}), 
we used Fano's inequality \cite{cover}
in Eq.~(\ref{eq:normalized-div-bound-2}), and $h(\cdot)$ is the binary
entropy function \cite{cover}.}




\end{document}